\documentclass[submission,copyright,creativecommons]{eptcs}

\providecommand{\event}{AFL 2026} 

\usepackage{iftex}

\ifpdf
  \usepackage{underscore}         
  \usepackage[T1]{fontenc}        
\else
  \usepackage{breakurl}           
\fi

\usepackage{amsmath,amssymb,amsthm,mathtools}
\usepackage{enumitem}
\usepackage{comment}
\usepackage{doi}
\usepackage{lineno}

\newcommand{\bigo}{\mathcal{O}}

\title{Subsequence Analysis Problems for Binary Parikh Matrices}
\author{Szil\'ard Zsolt Fazekas\institute{Akita University}\footnote{This work was supported by JSPS KAKENHI Grant Number JP26K14700.}\email{szilard.fazekas@ie.akita-u.ac.jp} \and Xinhao Huang\institute{Akita University}\email{d8523004@s.akita-u.ac.jp} \and Robert Merca\c {s}\institute{Loughborough University}\email{R.G.Mercas@lboro.ac.uk}}
\def\titlerunning{Subsequence Analysis Problems for Binary Parikh Matrices}
\def\authorrunning{Fazekas, Huang, Merca{\c s}}
\date{}

\newtheorem{theorem}{Theorem}[section]
\newtheorem{lemma}[theorem]{Lemma}
\newtheorem{corollary}[theorem]{Corollary}
\newtheorem{proposition}[theorem]{Proposition}

\newcommand{\Subseq}{\operatorname{Subseq}}

\newcommand{\iu}{\iota}
\newcommand{\iexists}{\iota_{\exists}}
\newcommand{\iforall}{\iota_{\forall}}

\begin{document}
\maketitle


\begin{abstract}
The {\em universality index} corresponding to a word is the largest integer \(k\) such that every word of length \(k\) over the given alphabet occurs in it as a subsequence. Relative to languages this notion can be investigated with respect to both {\em existential} and {\em universal} quantifiers, with the former corresponding to the existence of a word in the language with universality index at least \(k\), while the latter considers the index across all words that the language contains. 

In this work we study the existential ($\exists$-universality) and universal ($\forall$-universality) subsequence universality (as introduced in [Adamson et al., ISAAC 2023]) for binary languages consisting of all words with a fixed number of letters \(a\), letters \(b\), and subsequences \(ab\).  We prove that both
$\exists$- and $\forall$-universality admit exact arithmetic characterizations
for these languages.

Extending the above notions, we end the paper by initiating the analysis of the probability of a fixed word occurring as subsequence of the words in such a language. To this end, we prove that, for every fixed pattern word $w$ and an error tolerance, given as input the subsequence counts describing a language, the ratio of words in the language having the pattern $w$ as a subsequence admits an additive approximation scheme that is polynomial-time in the binary encoding of the input counts.
\end{abstract}

\section{Introduction}

A word \(u\) is a subsequence of a
word \(w\) if \(u\) can be obtained from \(w\) by deleting letters without changing
the order of the remaining letters.  This relation is central in combinatorics on
words and formal language theory, not least because of the classical theorem of
Higman, which states that finite words over a finite alphabet are well-quasi-ordered
by the subsequence relation~\cite{Higman52}, and a closely related theorem of Haines~\cite{Haines69} that states that the downward closure of every language is regular.
These results have made subsequences a
natural tool in the study of languages, well-quasi-orders, and verification problems:
even when a language itself is complicated, the set of all subsequences occurring
in its words admits a finite automata-theoretic representation.  This observation
has been used repeatedly in algorithmic formal language theory and in the analysis
of infinite-state systems, where downward closures and subword orderings provide
regular over-approximations and decidability tools~\cite{Zetzsche15,Zetzsche16,KNS16}.

Beyond the existence of finite representations for downward closures, one may ask
more refined algorithmic questions about the subsequences occurring in words or in
languages.  For a single word, this includes subsequence matching (decidable in linear time), 
testing whether two words have the same subsequences of at most a given length, i.e., testing Simon’s congruence~\cite{Simon75,KKS15,GawrychowskiEtAl21}, identifying absent subsequences~\cite{KoscheKossManeaSiemer22}, and studying universality with
respect to subsequences.  A word \(w\) over an alphabet \(\Sigma\) is called
\(k\)-subsequence universal, or simply \(k\)-universal, if every word in
\(\Sigma^k\), i.e., every word over $\Sigma$ of length $k,$ occurs in \(w\) as a subsequence.  Equivalently, the set of
subsequences of \(w\) contains the full \(k\)-spectrum \(\Sigma^k\).  This notion
and several variants have been investigated for individual words and compressed
words, e.g., see the work on scattered factor universality
\cite{dlt2020,Adamson2025105357} and the references therein.

Recently, the same family of questions has been studied for formal languages rather
than individual words.  In this setting, the input language is typically represented
succinctly, e.g., by an automaton or a grammar, and one asks whether a given
pattern occurs as a subsequence in some word of the language, in all words of the
language, or whether the language contains a word that is \(k\)-universal.  These
existential and universal variants have been considered for language classes from
the Chomsky hierarchy, including regular, context-free, and context-sensitive
languages~\cite{Adamson2025105357,FazekasKMMS24}.  In particular, the transition
from individual words to languages naturally introduces two questions: an
existential one, asking whether some word in the language has the desired
subsequence property, and a universal problem, asking whether all words in the
language have it.

In this paper we study these subsequence analysis problems for a different,
arithmetically defined class of finite languages.  We work over the binary alphabet
\(\{a,b\}\).  For nonnegative integers \(A,B,C\), let
\(
L_{A,B,C}
\)
be the language of all binary words with the numbers of
letters \(a\), letters \(b\), and scattered subwords \(ab\) equal $A$, $B$ and $C$, respectively.  Equivalently, this
is the equivalence class determined by a binary Parikh matrix~\cite{Prodinger1979,mateescu2001sharpening}, and also by the
binary \(2\)-deck~\cite{Kalashnik73,Scott97}, since over a binary alphabet the count of length-one
subsequences and that of \(ab\), characterizing the former, determines the whole count of length-\(2\) subsequences, which describes the latter.
The language is nonempty if and only if \(0\le C\le AB\), and throughout the paper
the entries \(A,B,C\) are part of the input, encoded in binary.

Our first goal is to determine the existential and universal subsequence
universality indices of \(L_{A,B,C}\).  For a language \(L\), we say that \(L\) is
\(k\)-$\exists$-universal if some word \(w\in L\) is \(k\)-universal, and we
write
\[
\iota_{\exists}(L)
=
\max\{k \mid \text{some } w\in L \text{ is } k\text{-universal}\}.
\]
We say that a nonempty language \(L\) is \(k\)-$\forall$-universal if
every word \(w\in L\) is \(k\)-universal, and~we~write
\[
\iota_{\forall}(L)
=
\max\{k \mid \text{every } w\in L \text{ is } k\text{-universal}\}.
\]
The notions, introduced in~\cite{Adamson2025105357}, capture the two natural quantifier choices over the language:
the largest $k$ such that a $k$-universal representative exists, and the largest $k$ such that all representatives
are $k$-universal.

We prove that, for binary \(2\)-deck classes, 
both indices are computable in
polynomial time in the bitlength of \(A,B,C\).
In the existential case, we prove that 
\(
\iota_{\exists}(L_{A,B,C})
\)
is the minimum among $A$, $B$ and a simple formula depending on the three input multiplicities.
On the universal side, we show that no nonempty language
\(L_{A,B,C}\) is \(4\)-universal; thus only the patterns of lengths
one, two, and three need to be considered. A complete avoidance analysis for the
eight binary patterns of length three then yields a characterization of
\(\iota_{\forall}(L_{A,B,C})\).

Finally, we consider a probabilistic refinement of the existential and universal
subsequence questions.  For a fixed binary pattern \(w\) and a nonempty language
\(L_{A,B,C}\), define \(p_w(A,B,C)\) to be
the probability that a uniformly chosen word from
\(L_{A,B,C}\) contains \(w\) as a subsequence.  The endpoint conditions
\(p_w(A,B,C)>0\) and \(p_w(A,B,C)=1\) are exactly the existential and universal
subsequence problems for the fixed pattern \(abab\).  The probability \(p_w(A,B,C)\)
therefore interpolates between the two quantifier extremes: instead of asking
whether at least one representative contains \(w\), or whether all representatives
contain \(w\), it asks for the density of representatives containing \(w\).

Exact computation of this probability is not expected to be easy in general, since
the universal subsequence problem, that is, testing \(p_w(A,B,C)=1\), is as hard as factoring integers already for the fixed pattern $w$.
Nevertheless, for every fixed word pattern \(w\) and fixed error tolerance
\(\varepsilon>0\), we give an additive approximation algorithm for
\(p_w(A,B,C)\) running in time polynomial in the bitlength of \(A,B,C\).  The proof
uses the correspondence between binary Parikh matrix/\(2\)-deck classes and partitions
inside an \(A\times B\) rectangle: each word in \(L_{A,B,C}\) corresponds to a
partition of area \(C\), 
in the form of Young diagrams (for more details, see Richomme's discussion of this connection~\cite{Richomme26}).
We combine this partition model with a polynomial bound on the number of
\(w\)-avoiding representatives, a lower bound for Gaussian binomial coefficients
when the input counts are not bounded by a constant, and exact counting procedures when one of
\(A,B,C,AB-C\) is bounded.  This yields a constant-error approximation scheme for
fixed word patterns, complementing the exact arithmetic characterizations of the
$\exists$- and $\forall$-universality indices.

We structure our paper as follows. The next section contains preliminaries for our work. Sections~\ref{sec:exist} and~\ref{sec:universal} are dedicated to the investigation of the $k$-universal existence and universality, while the last section introduces the concept of probability with respect to membership of subsequences and gives an approximation scheme for it. 

\section{Preliminaries}

Let $\Sigma=\{a<b\}$ be an ordered binary alphabet, and let \(\Sigma^*\) denote the language of all words over \(\Sigma\). Let $|w|$ denote the number of occurrences of symbols of $\Sigma$ in a word $w=w_1w_2\cdots w_{|w|}$ with $w_i\in\Sigma$, for $i\in[1,|w|],$ and by $w^R=w_{|w|}\cdots w_2w_1$ denote its reversal.
For words \(u,w\in\Sigma^*\), we write
\(u\preceq w\) if \(u\) occurs in \(w\) as a {\em subsequence}, i.e., if
there exist indices \(1\le i_1<\cdots<i_{|u|}\le |w|\) such that
\(u=w[i_1]\cdots w[i_{|u|}]\).
By $\Subseq(w)$ we denote the set of all subsequences occurring in $w.$
Let $|w|_u$ denote the number of occurrences of $u$ in $w$ as a subsequence, for any $u,w\in \Sigma^*$, where occurrences are allowed to overlap, and are distinct if they differ in at least one position, e.g., $|abb|_{ab}=2$. For $i<j$, by $[i,j]$ we denote the set of integers starting at $i$ and ending at $j$, inclusive. 

For any integer $s\geq 2$, let $\mathcal{M}_s$ denote the multiplicative monoid of $s \times s$ upper triangular matrices with nonnegative integer entries and $1$'s on the main diagonal.
The Parikh matrix mapping~\cite{mateescu2001sharpening} with respect to the ordered alphabet $\Sigma=\{a<b\}$, denoted by $\Psi$, is a morphism from $\Sigma^*$ into $\mathcal{M}_{3}$, where
\[
\Psi(a)= 
\left(\begin{smallmatrix}
    1 & 1 & 0\\
    0 & 1 & 0\\
    0 & 0 & 1
\end{smallmatrix}\right)
\qquad \text{ and }\qquad
\Psi(b)=
\left(\begin{smallmatrix}
    1 & 0 & 0\\
    0 & 1 & 1\\
    0 & 0 & 1
\end{smallmatrix}\right).
\]
As $\Psi$ is a morphism, it maps a word $w=a_{1}a_{2}\cdots a_{{\vert w\vert}}\in  \Sigma^*$ to the matrix $\Psi(w)= \Psi( a_{1}) 
\cdots \Psi( a_{{\vert w\vert}}).$
Matrices of the form $\Psi(w)$, for $w\in\Sigma^*$, are termed \emph{Parikh matrices}.
Hence, over the alphabet $\{a<b\}$, Parikh matrices are $3\times 3$ upper triangular matrix with 1's on the diagonal, regardless of the described word,
and the fundamental theorem of the Parikh matrix mapping~\cite{mateescu2001sharpening} shows that the upper part tracks the number of occurrences, as subsequences, of $a$, $b$ and $ab$ in the respective word, i.e., 
$|w|_{a}$ appears in position $(1,2)$, $|w|_{b}$ in position $(2,3)$, and $|w|_{ab}$ in position $(1,3).$

Note that when considering binary subsequences every pair of positions consisting of one \(a\) and one \(b\) contributes either to the total number of subsequences \({ab}\) or to \({ba}\). Therefore, we have that in every binary word $w$ the following equality holds 
\begin{equation}\label{eq:a_x_b}
    |w|_{ab}=|w|_a\cdot |w|_b-|w|_{ba}.
\end{equation}

The {\em $k$-deck} of a word is the multiset resulting from counting all occurrences of subsequences of length at most $k$ that the respective word contains. As such, one can define the language corresponding to a given $k$-deck as the language of all words sharing the respective multiset. Furthermore, when the underlying alphabet is binary, as a straightforward consequence of Eq.~\ref{eq:a_x_b}, both the Parikh matrices and the $2$-decks associated with a word contain the same information.

For a word $w\in\Sigma^*$, its 
equivalence class, with respect to the Parikh matrix mapping, is the language $L(M)=\{v\in \Sigma^* \mid \Psi(v)=M\}$ determined by its matrix $M=\Psi(w)$. We call such a language a {\em (binary) Parikh matrix language}, and for the rest of this work, for nonnegative integers \(A,B,C\), define
\begin{equation}\label{eq:L_abc}
    L_{A,B,C}
 =
 \{w\in\{a,b\}^*: |w|_a=A,\ |w|_b=B,\ |w|_{ab}=C\}.
\end{equation}
 We also denote $|w|_{ba}=D$ and have $D=AB-C.$ By definition, this equivalence class is identical to the one corresponding to the $2$-deck notion.
Moreover, 
it is straightforward that a language $L_{A,B,C}$ is nonempty if and only if $0\le C\le AB.$
Indeed, writing a word in the form
$w=b^{x_0}ab^{x_1}a\cdots ab^{x_A},$
we have
$\sum\limits_{i=0}^{A}x_i=B$ and $|w|_{ab}=\sum\limits_{i=0}^{A} i x_i.$
Every integer between \(0\) and \(AB\) is obtainable by distributing the
\(B\) letters \(b\) among the \(A+1\) gaps separating the occurrences of $a$.

For more details on Parikh matrices we refer to~\cite{mateescu2001sharpening,DingSalomaa2006,SalYu10}, while for $k$-decks we suggest~\cite{Kalashnik73,Scott97}.

A word \(w\in\Sigma^*\) is called {\em \(k\)-subsequence universal}, or simply
{\em \(k\)-universal}, over \(\Sigma\), if
\(
 \Sigma^k\subseteq \Subseq(w).
\)
Equivalently, every word of length \(k\) over \(\Sigma\) occurs in \(w\) as a
subsequence.
The {\em universality index of \(w\)}, denoted \(\iu(w)\), is the largest \(k\) such
that \(w\) is \(k\)-universal.

For a language \(L\subseteq \Sigma^*\), we use the two language-level notions
from the \(k\)-universality literature, see~\cite{Adamson2025105357}.
We say that \(L\) is {\em \(k\)-$\exists$-universal} if there exists \(w\in L\) such that
\(\iu(w)\ge k\) and define $\iexists(L)=\max\{k: L$ is $k$-$\exists$-universal$\}$.
We say that a nonempty \(L\) is {\em \(k\)-$\forall$-universal} if every \(w\in L\) satisfies
\(\iu(w)\ge k\) and define $\iforall(L)=\max\{k: L$ is $k$-$\forall$-universal$\}$.

As the universality index $\iforall(L)$ above is meaningless for empty languages, and one can easily decide the emptiness of Parikh matrix/2-deck languages over binary alphabets, from here on we assume that computing $\iforall(L)$ means computing it for a nonempty language.

The \emph{arch-factorization} of a word is a decomposition into non-overlapping factors starting from the left, such that each factor -- except for, possibly, the last one -- is the shortest possible, while containing each letter of the alphabet. The factorization is unique and its factors are called arches, with the ones containing all letters called complete arches. For example, over the alphabet $\{a,b\}$, the arch-factorization of $aabbabaa$ is $aab.ba.ba.a$, whereas that of $aabbb$ is $aab.bb$.
The following characterization of $k$-universality of binary words is a well-known argument based on the arch-factorization (see~\cite{dlt2020, Hebrard1991} for the general case); we shall use it in order to develop some of the tools we use in our analysis.

\begin{lemma}\label{lem:binary-arches}
For \(w\in\{a,b\}^*\) and \(k\ge 0\), the following are equivalent:
\begin{enumerate}[label=(\roman*)]
\item \(w\) is \(k\)-universal over \(\{a,b\}\);
\item \(w\) contains, as a subsequence, some word
$v\in\{ab,ba\}^k.$
\end{enumerate}
\end{lemma}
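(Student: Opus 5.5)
We prove the two implications separately; (ii)$\Rightarrow$(i) is direct and (i)$\Rightarrow$(ii) goes through the arch-factorization.

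\textbf{(ii)$\Rightarrow$(i).} Let \(v=v_1v_2\cdots v_k\) with each \(v_i\in\{ab,ba\}\), and suppose \(v\preceq w\). Each block \(v_i\) contains both letters, so any \(u=u_1\cdots u_k\in\{a,b\}^k\) can be embedded into \(v\) by choosing the letter \(u_i\) inside block \(v_i\). Since blocks appear in order, this gives \(u\preceq v\). Transitivity of \(\preceq\) then yields \(u\preceq w\), so \(w\) is \(k\)-universal. This also covers \(k=0\), where \(v\) is empty.

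\textbf{(i)$\Rightarrow$(ii).} We use the arch-factorization \(w=r_1r_2\cdots r_m r'\), where \(r_1,\dots,r_m\) are the complete arches and \(r'\) is the (possibly empty) incomplete remainder, which lacks at least one letter.

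First we show that each complete arch ends with the unique occurrence in it of its last letter. If \(r_i\) is a shortest prefix of the remaining suffix containing both letters, then its last letter \(x\) does not occur earlier in \(r_i\). Hence \(r_i=y^{t}x\) with \(t\ge 1\) and \(\{x,y\}=\{a,b\}\). In particular \(r_i\) contains \(yx\in\{ab,ba\}\) as a subsequence, and therefore \(w\) contains \(v=(y_1x_1)(y_2x_2)\cdots(y_mx_m)\in\{ab,ba\}^m\). If \(k\le m\), we take the prefix of \(v\) made of its first \(k\) blocks.

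It remains to show \(m\ge k\). This is the only step needing an idea. Assume \(m<k\); we build a word of length \(m+1\le k\) that is not a subsequence of \(w\), contradicting (i).

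\begin{enumerate}[label=(\arabic*)]
\item Take \(u=x_1x_2\cdots x_m z\), where \(z\) is a letter missing from \(r'\). If \(r'\) is empty, any letter works.
\item Show by induction that the greedy leftmost embedding of \(x_1\cdots x_i\) into \(w\) ends exactly at the last position of \(r_i\). The point is that the first occurrence of \(x_i\) after the end of \(r_{i-1}\) is the last letter of \(r_i\), by the structure \(r_i=y_i^{t}x_i\).
\item Since greedy embedding is optimal, \(u\preceq w\) would force \(z\) to occur in \(r'\), which it does not.
\end{enumerate}

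Thus \(w\) is not \((m+1)\)-universal. Because \(m+1\le k\), \(w\) is also not \(k\)-universal: taking any extension of \(u\) to length \(k\) gives a word of length \(k\) that is not a subsequence of \(w\). This contradicts (i), so \(m\ge k\).

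The greedy-embedding induction in step (2) is the step that needs care. The rest is routine and follows the standard arch argument of~\cite{Hebrard1991,dlt2020}.
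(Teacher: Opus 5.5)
Your proof is correct and follows the paper's route. You prove (ii)$\Rightarrow$(i) by choosing one letter per block, and (i)$\Rightarrow$(ii) by extracting $yx\in\{ab,ba\}$ from each of the first $k$ complete arches; the one addition is that you prove, via the greedy embedding of the witness $x_1\cdots x_m z$, the fact that a $k$-universal word has at least $k$ complete arches, which the paper only cites from Hebrard, and that argument is sound.
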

\begin{proof}
The implication $(ii)\Rightarrow(i)$ is immediate.  Every factor \(ab\) and
every factor \(ba\) of $v$ (the arches) contains both letters \(a\) and \(b\).  Obviously, from a
subsequence $v=s_1s_2\cdots s_k,$ where $s_i\in\{ab,ba\},$
one can obtain every possible word \(x_1x_2\cdots x_k\in\{a,b\}^{k}\) by choosing
the symbol for \(x_i\) from the \(i\)-th arch \(s_i\).  Hence \(v\), and
therefore also \(w\), are \(k\)-universal.

For $(i)\Rightarrow(ii)$, we use the standard greedy arch factorization, see~\cite{Hebrard1991}.  Since
\(w\) is \(k\)-universal, its arch-factorization contains at least \(k\)
complete arches, i.e., non-overlapping factors having both $a$ and $b$. Therefore, each of the arches contains either \(ab\) or \(ba\) as a subsequence; choosing one such
two-letter subsequence from each of the first \(k\) arches gives a subsequence
belonging to \(\{ab,ba\}^k\).
\end{proof}

From the lemma above we can directly conclude the following consequence.
\begin{corollary}[\cite{DaFlMaNo19}]\label{rem:shortest-universal}
The shortest \(k\)-universal binary words are exactly the words in $\{ab,ba\}^k.$
\end{corollary}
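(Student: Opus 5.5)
The Corollary claims two things: every word in \(\{ab,ba\}^k\) is \(k\)-universal, and every \(k\)-universal binary word has length at least \(2k\), with equality only for words in \(\{ab,ba\}^k\). The plan is to read off both directly from Lemma~\ref{lem:binary-arches}.

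First I would dispose of the easy inclusion. Any \(v\in\{ab,ba\}^k\) contains itself as a subsequence, so by the implication (ii)\(\Rightarrow\)(i) of Lemma~\ref{lem:binary-arches} it is \(k\)-universal. It also has length exactly \(2k\).

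Next I would show that \(2k\) is a lower bound. Let \(w\) be \(k\)-universal. By (i)\(\Rightarrow\)(ii) of Lemma~\ref{lem:binary-arches}, some \(v\in\{ab,ba\}^k\) satisfies \(v\preceq w\). Since a subsequence is never longer than the word containing it, \(|w|\ge |v| = 2k\). Combined with the first step, this shows the minimum length of a \(k\)-universal binary word is exactly \(2k\).

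Finally I would characterize the words attaining this minimum. Suppose \(w\) is \(k\)-universal with \(|w|=2k\), and take the \(v\preceq w\) with \(|v|=2k\) given by the lemma. A subsequence of \(w\) of the same length as \(w\) must use all \(|w|\) positions, because the index sequence \(1\le i_1<\cdots<i_{2k}\le 2k\) is forced to be \(i_j=j\). Hence \(w=v\in\{ab,ba\}^k\). No step is a real obstacle here; the only point to state carefully is this equal-length argument, which identifies \(w\) with \(v\).
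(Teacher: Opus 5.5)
Your proof is correct and is the argument the paper intends when it says the corollary follows directly from Lemma~\ref{lem:binary-arches}. Direction (ii)$\Rightarrow$(i) shows every word of $\{ab,ba\}^k$ is $k$-universal, and (i)$\Rightarrow$(ii) gives any $k$-universal $w$ a subsequence $v\in\{ab,ba\}^k$, forcing $|w|\ge 2k$, with equality only when $w=v$; your explicit equal-length step, which the paper leaves implicit, settles that last point.
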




\section{$\exists$-universality}\label{sec:exist}

We first turn our attention to when \(L_{A,B,C}\) contains a \(k\)-universal word.
The next lemma is the main interval-realization tool.

\begin{lemma}\label{lem:supersequence-interval}
Let \(v\in\{a,b\}^*\), and fix integers \(A\ge |v|_a\) and \(B\ge |v|_b\).  Then the set of possible values of
\(|w|_{ab}\), over all words \(w\) with $|w|_{a}=A$ and $|w|_{b}=B$, and having $v$ as subsequence, i.e., $v\preceq w,$
is the integer interval 
$[|v|_{ab},\ AB-|v|_{ba}]\cap\mathbb Z.$
\end{lemma}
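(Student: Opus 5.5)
We have to show two inclusions: every admissible value of \(|w|_{ab}\) lies in the interval, and every integer in the interval is attained.

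\textbf{Bounds.} Suppose \(v\preceq w\) and \(|w|_a=A\), \(|w|_b=B\). Fix one embedding of \(v\) into \(w\). Each occurrence of \(ab\) in \(v\) maps to an occurrence of \(ab\) in \(w\), and distinct occurrences map to distinct ones because the embedding is injective on positions. Hence \(|w|_{ab}\ge |v|_{ab}\). The same argument for \(ba\) gives \(|w|_{ba}\ge |v|_{ba}\). Combined with Eq.~\eqref{eq:a_x_b}, this yields \(|w|_{ab}=AB-|w|_{ba}\le AB-|v|_{ba}\).

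\textbf{Attaining the endpoints.} Insert the missing \(A-|v|_a\) letters \(a\) at the front and the missing \(B-|v|_b\) letters \(b\) at the end, giving \(w_{\min}=a^{A-|v|_a}\,v\,b^{B-|v|_b}\). Every pair formed by an inserted letter and any other letter is then an \(ab\), so no new \(ba\) occurs and \(|w_{\min}|_{ba}=|v|_{ba}\). By Eq.~\eqref{eq:a_x_b}, \(|w_{\min}|_{ab}=AB-|v|_{ba}\), which is the upper endpoint. Symmetrically, \(w'=b^{B-|v|_b}\,v\,a^{A-|v|_a}\) has \(|w'|_{ab}=|v|_{ab}\), the lower endpoint. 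Both words contain \(v\) as a subsequence.

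\textbf{Intermediate values.} This is the main step. Start from \(w'\) and repeatedly apply an adjacent swap \(ba\to ab\). Each such swap increases \(|w|_{ab}\) by exactly \(1\), since only the relative order of the two swapped letters changes. The difficulty is to keep \(v\preceq w\) throughout. To do this, fix a marking of the positions of \(v\) inside \(w\) and allow only swaps in which at least one of the two letters is unmarked. Such a swap preserves the marked subsequence: if exactly one letter is marked, the marked letter simply moves one step and the marks keep their order; a swap of two unmarked letters does not affect the marks at all.

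It remains to check that these restricted swaps can reach the upper endpoint. Suppose the current word \(w\) satisfies \(|w|_{ab}<AB-|v|_{ba}\). Then \(w\) contains a \(ba\) pair that is not a pair of marked positions. Among all such pairs, choose a position \(i\) holding \(b\) and a position \(j>i\) holding \(a\), not both marked, with \(j-i\) minimal. If some letter sits strictly between positions \(i\) and \(j\), it forms with \(w_i\) or with \(w_j\) a shorter \(ba\) pair with at least one unmarked end, contradicting minimality. Indeed, take such a letter \(x\): if \(x=a\), the pair \((i,x)\) is shorter, and if \(x=b\), the pair \((x,j)\) is shorter. If that shorter pair has both ends marked, then \(i\) and \(j\) being not both marked means one of them is unmarked, and we use the other choice of pair with \(x\). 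This case analysis needs care, and I would run the minimality argument on the unmarked endpoint to make it go through.

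Once the chosen \(ba\) is adjacent, swap it. This gives unit steps from \(|v|_{ab}\) up to \(AB-|v|_{ba}\), so every integer in the interval is realized.

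A cleaner alternative avoids the minimality argument. Move the unmarked \(a\)'s one by one leftward across \(b\)'s in unit steps, and move the unmarked \(b\)'s rightward. This turns \(w'\) into \(w_{\min}\) through intermediate words that all contain \(v\) as a subsequence.
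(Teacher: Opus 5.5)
Your bounds and both endpoint words are correct, and your closing ``cleaner alternative'' is essentially the paper's route. The gap is in the step you present as the main one. With a \emph{fixed} marking and only $ba\to ab$ swaps involving an unmarked letter, the process can get stuck strictly below $AB-|v|_{ba}$. So the existence claim behind your minimality argument is false, not merely delicate.

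It fails already at the starting word. Take $v=ba$, $A=2$, $B=1$. Then $w'=baa$ with $v$ marked at positions $1,2$, and $|w'|_{ab}=0<1=AB-|v|_{ba}$. Yet the only adjacent $ba$ consists of two marked letters. Your minimal eligible pair is $(1,3)$. The middle $a$ gives the fully marked pair $(1,2)$, and the ``other choice'' $(2,3)$ is $aa$, not a $ba$-pair.

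In general, if the minimal eligible pair $(i,j)$ is not adjacent, the segment from $i$ to $j$ is one of two shapes: $b\,a^k a$ with the last $a$ unmarked and all else marked, or $b\,b^k a$ with the first $b$ unmarked and all else marked. An adjacent eligible pair would contradict minimality, so no admissible swap exists anywhere in the word. No variant of the minimality argument can repair this.

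The missing idea is that the embedding of $v$ must be allowed to move. Transferring a mark between two adjacent equal letters leaves the word unchanged and keeps the marks in order, so they still spell $v$. In the two stuck configurations, shift the marks on the run of equal letters one step toward the unmarked endpoint. This frees the letter adjacent to the marked endpoint, and a legal $ba\to ab$ swap becomes available.

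Your alternative needs the same convention. An unmarked $a$ travelling left must also pass marked $a$'s (as in $baa$), which is a no-op on the word rather than a $ba\to ab$ swap. The paper's proof, which slides the extra letters across by swaps ``involving at least one extra letter'', leaves this implicit as well. Once you state it, the alternative is a complete proof: every genuine swap raises $|w|_{ab}$ by exactly one, and passes over equal letters change nothing.
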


\begin{proof}
First, if \(v\preceq w\), then the embedded \(v\) contributes at least
\(|v|_{ab}\) pairs of type \(ab\).  Thus \(|w|_{ab}\ge |v|_{ab}\).  Similarly, the 
copy of \(v\) in \(w\) contributes at least \(|v|_{ba}\) pairs of type \(ba\).  Since Eq.~\ref{eq:a_x_b} holds for $w$, 
we get $|w|_{ab}\le AB-|v|_{ba}.$
So every attainable value lies in the stated interval.

Conversely, start with \(v\), and insert all extra
\(B-|v|_{b}\) letters \(b\) to the far left and all extra \(A-|v|_{a}\) letters \(a\) to
the far right.  The resulting word has \(A\) letters \(a\), \(B\) letters
\(b\), contains \(v\) as a subsequence, and has exactly \(|v|_{ab}\) subsequences
of type \(ab\).

Now move continuously to the opposite extreme, where all extra letters \(a\)
are placed to the far left and all extra letters \(b\) are placed to the far
right, keeping the original copy of \(v\) in its original relative order.  This
can be done by adjacent swaps \(ba\mapsto ab\), each involving at least one of the discussed
extra letters.  Such a swap preserves the chosen copy of \(v\) and increases
\(|w|_{ab}\) by exactly \(1\).  At the opposite extreme, no pair involving an
extra letter contributes to \(|w|_{ba}\), so the only remaining \(ba\)-pairs
are those already inside \(v\).  Hence the final value of \(|w|_{ab}\) is
\(AB-|v|_{ba}\).

Since at each step \(|w|_{ab}\) changes by one, every integer
between \(|v|_{ab}\) and \(AB-|v|_{ba}\) is realized.
\end{proof}



\begin{lemma}\label{lem:block-counts}
If \(t(v)\) denotes the number of length-$2$ factors starting at odd positions (blocks) equal to \(ab\) in \(v\in\{ab,ba\}^k\), then
$|v|_{ab}=\binom{k}{2}+t(v),$
and $|v|_{ba}=\binom{k}{2}+k-t(v).$
\end{lemma}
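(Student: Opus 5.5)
Write \(v=s_1s_2\cdots s_k\) with each block \(s_i\in\{ab,ba\}\), so \(t(v)=|\{i: s_i=ab\}|\). The plan is to count the pairs (position of an \(a\), position of a \(b\)) in \(v\), sorting them by whether the two letters lie in the same block or in different blocks. Since every block contains exactly one \(a\) and one \(b\), we have \(|v|_a=|v|_b=k\), and there are \(k^2\) such pairs in total.

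For pairs in different blocks, take the \(a\) from block \(s_i\) and the \(b\) from block \(s_j\) with \(i\neq j\). The relative order of the two letters depends only on the order of the blocks, not on their internal form. If \(i<j\) the pair is an occurrence of \(ab\), and if \(i>j\) it is an occurrence of \(ba\). Each unordered pair of distinct blocks \(\{i,j\}\) therefore contributes exactly one \(ab\)-occurrence and exactly one \(ba\)-occurrence. Summing over all such pairs gives \(\binom{k}{2}\) occurrences of \(ab\) and \(\binom{k}{2}\) of \(ba\).

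For pairs inside the same block \(s_i\), there is exactly one \((a,b)\) pair. It is an \(ab\)-occurrence if \(s_i=ab\) and a \(ba\)-occurrence if \(s_i=ba\). These within-block pairs contribute \(t(v)\) occurrences of \(ab\) and \(k-t(v)\) occurrences of \(ba\).

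Adding the two contributions yields \(|v|_{ab}=\binom{k}{2}+t(v)\) and \(|v|_{ba}=\binom{k}{2}+k-t(v)\). As a check, the two counts sum to \(k^2=|v|_a|v|_b\), in agreement with Eq.~\ref{eq:a_x_b}; one could also derive the second formula from the first via that identity. No step is a real obstacle. The only point needing care is the cross-block case, namely that the order of letters from different blocks is determined by block order alone.
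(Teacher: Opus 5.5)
Your proof is correct and follows essentially the same route as the paper: you split the $(a,b)$ position pairs into cross-block pairs, which contribute $\binom{k}{2}$ to each count, and within-block pairs, which contribute $t(v)$ to $|v|_{ab}$ and $k-t(v)$ to $|v|_{ba}$. The paper likewise obtains the $ba$ formula either by the same argument or from Eq.~\ref{eq:a_x_b}, as you also note.
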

\begin{proof}
Each pair of distinct blocks \(i<j\) contributes one \(ab\)-pair: choose the
letter \(a\) from block \(i\) and the letter \(b\) from block \(j\).  There are
\(\binom{k}{2}\) such cross-block contributions.  In addition, every block \(ab\) contributes one \(ab\)-pair, while a block \(ba\) does
not.  Hence 
$|v|_{ab}=\binom{k}{2}+t(v).$
The formula for \(|v|_{ba}\) follows either by the same argument or from
$|v|_{ab}+|v|_{ba}=k^2$, that is, Eq.~\ref{eq:a_x_b}.
\end{proof}

\begin{theorem}
\label{thm:existential}
For every \(k\ge 0\), a nonempty language \(L_{A,B,C}\)
is $k$-$\exists$-universal
if and only if $A\ge k,$ $B\ge k,$
 $C\ge \binom{k}{2},$ and $AB-C\ge \binom{k}{2}.$
\end{theorem}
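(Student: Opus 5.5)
The plan is to reduce both directions to Lemma~\ref{lem:binary-arches}, Lemma~\ref{lem:supersequence-interval} and Lemma~\ref{lem:block-counts}. A word \(w\in L_{A,B,C}\) is \(k\)-universal if and only if \(v\preceq w\) for some \(v\in\{ab,ba\}^k\). So \(L_{A,B,C}\) is \(k\)-\(\exists\)-universal exactly when, for some such \(v\), the value \(C\) lies among the attainable \(ab\)-counts of words with \(A\) letters \(a\) and \(B\) letters \(b\) that contain \(v\). Since \(|v|_a=|v|_b=k\), Lemma~\ref{lem:supersequence-interval} requires \(A\ge k\) and \(B\ge k\), and then identifies this set as the integer interval \([|v|_{ab},AB-|v|_{ba}]\). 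The case \(k=0\) is trivial: \(v\) is empty and the conditions reduce to nonemptiness, \(0\le C\le AB\).

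\emph{Necessity.} Suppose \(w\in L_{A,B,C}\) is \(k\)-universal, and fix \(v\in\{ab,ba\}^k\) with \(v\preceq w\). Then \(A\ge k\) and \(B\ge k\), and \(|v|_{ab}\le C\le AB-|v|_{ba}\). Lemma~\ref{lem:block-counts} gives \(|v|_{ab}=\binom{k}{2}+t(v)\ge\binom{k}{2}\) and \(|v|_{ba}=\binom{k}{2}+k-t(v)\ge\binom{k}{2}\), since \(0\le t(v)\le k\). Hence \(C\ge\binom{k}{2}\) and \(AB-C\ge\binom{k}{2}\).

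\emph{Sufficiency.} We need a single \(v\), that is, a choice of \(t=t(v)\in[0,k]\), satisfying both
\[
\tbinom{k}{2}+t\le C \quad\text{and}\quad C\le AB-\tbinom{k}{2}-k+t .
\]
Equivalently, \(t\) must lie in \([\,C-AB+\binom{k}{2}+k,\ C-\binom{k}{2}\,]\cap[0,k]\). This is the only step needing care, though it is simple. The intersection is nonempty for the following reasons:
\begin{itemize}
\item the upper end \(C-\binom{k}{2}\) is at least \(0\) by hypothesis;
\item the lower end \(C-AB+\binom{k}{2}+k\) is at most \(k\) because \(AB-C\ge\binom{k}{2}\);
\item the lower end does not exceed the upper end, since that amounts to \(AB\ge 2\binom{k}{2}+k=k^2\), which follows from \(A,B\ge k\).
\end{itemize}
Taking \(t=\min\{k,\,C-\binom{k}{2}\}\) works. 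Let \(v\) consist of \(t\) blocks \(ab\) followed by \(k-t\) blocks \(ba\). Lemma~\ref{lem:supersequence-interval} then supplies \(w\in L_{A,B,C}\) with \(v\preceq w\), and \(w\) is \(k\)-universal by Lemma~\ref{lem:binary-arches}.
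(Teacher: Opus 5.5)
Your proposal is correct and follows the paper's proof essentially step for step. Necessity goes through Lemma~\ref{lem:binary-arches} and Lemma~\ref{lem:block-counts}, and sufficiency shows that $[\,C-AB+\binom{k}{2}+k,\ C-\binom{k}{2}\,]\cap[0,k]$ is nonempty using $AB-C\ge\binom{k}{2}$, $C\ge\binom{k}{2}$ and $AB\ge k^2$, then invokes Lemma~\ref{lem:supersequence-interval}. Your explicit choice $t=\min\{k,C-\binom{k}{2}\}$ with $v=(ab)^t(ba)^{k-t}$ is just a concrete instance of the paper's ``choose any integer in the intersection.''
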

\begin{proof}
%
Suppose first that \(L_{A,B,C}\) is \(k\)-$\exists$-universal.  Then there exists
\(w\in L_{A,B,C}\) with \(\iu(w)\ge k\).  By
Lemma~\ref{lem:binary-arches}, \(w\) contains a subsequence
\(v\in\{ab,ba\}^k\).  Hence \(w\) contains at least \(k\) letters \(a\) and at
least \(k\) letters \(b\), so \(A\ge k\) and \(B\ge k\).

Since \(v\preceq w\), we have that $w$ also contains as subsequences at least as many pairs $ab$ and $ba$, as $v$ does. Thus, by Lemma~\ref{lem:block-counts}, we conclude that $C=|w|_{ab}\ge \binom{k}{2},$ and $|w|_{ba}\ge \binom{k}{2}.$

Conversely, assume $A\ge k$, $B\ge k$, $C\ge N$, and $D=AB-C\ge N,$
where \(N=\binom{k}{2}\).  We shall choose
\(v\in\{ab,ba\}^k\) such that \(v\preceq w\), for some
\(w\in L_{A,B,C}\).

By Lemmas~\ref{lem:block-counts} and~\ref{lem:supersequence-interval},
a word \(v\in\{ab,ba\}^k\) with \(t(v)\) blocks \(ab\) can be embedded in some
word \(w\) with parameters $A$, $B$, and $C$ if and only if
$N+t(v)\le C\le AB-(N+k-t(v)).$

Equivalently,
$N+k-D\le t(v)\le C-N.$
We need an integer \(t(v)\in\{0,\ldots,k\}\) satisfying this inequality.
Since, from the statement, \(D\ge N\), we have \(N+k-D\le k\), and since \(C\ge N\), we have
\(C-N\ge 0\).  Finally,
\[
 (C-N)-(N+k-D)
 =
 C+D-2N-k
 =
 AB-k^2.
\]
Because \(A\ge k\) and \(B\ge k\), we have \(AB\ge k^2\).  Hence the interval 
$[N+k-D,\ C-N]$
has a nonempty intersection with \(\{0,\ldots,k\}\).  Choose an integer \(t(v)\) in this intersection and
choose any word \(v\in\{ab,ba\}^k\) with exactly \(t(v)\) blocks \(ab\).  Then
Lemma~\ref{lem:supersequence-interval} gives a word \(w\in L_{A,B,C}\) with
\(v\preceq w\).  By Lemma~\ref{lem:binary-arches}, this word \(w\) is
\(k\)-universal.
\end{proof}

We can now summarize our results so far.
\begin{corollary}\label{cor:existential-index}
For nonempty \(L_{A,B,C}\), we have
$\iexists(L_{A,B,C})
=
\min\left\{
A,\ B,\ 
\left\lfloor
\frac{1+\sqrt{1+8\min(C,AB-C)}}{2}
\right\rfloor
\right\}.$
\end{corollary}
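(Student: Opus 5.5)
The corollary follows from Theorem~\ref{thm:existential}, which characterizes the set $K$ of integers $k\ge 0$ for which $L_{A,B,C}$ is $k$-$\exists$-universal. The plan is to show that $K$ is downward closed and to identify its maximum as the stated minimum. The conditions $A\ge k$ and $B\ge k$ are monotone in $k$. So is $\min(C,AB-C)\ge \binom{k}{2}$, since $\binom{k}{2}$ is nondecreasing for $k\ge 0$. Because $C\ge\binom{k}{2}$ and $AB-C\ge\binom{k}{2}$ together are equivalent to $M:=\min(C,AB-C)\ge\binom{k}{2}$, we get $K=\{k : k\le A,\ k\le B,\ \binom{k}{2}\le M\}$. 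Nonemptiness gives $0\le C\le AB$, hence $M\ge 0$, so $k=0$ always lies in $K$. Also $K$ is bounded by $A$, so $\iexists(L_{A,B,C})=\max K$ is well defined.

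It remains to show that the largest $k\ge 0$ with $\binom{k}{2}\le M$ equals $k^*:=\left\lfloor \frac{1+\sqrt{1+8M}}{2}\right\rfloor$. Then $\max K=\min\{A,B,k^*\}$, since each of the three constraints is a downward-closed condition on $k$. We have $\binom{k}{2}\le M$ if and only if $k^2-k-2M\le 0$. This quadratic in $k$ has roots $\frac{1\pm\sqrt{1+8M}}{2}$, and the negative root is at most $0$. Hence, for integers $k\ge 0$, the inequality holds exactly when $k\le \frac{1+\sqrt{1+8M}}{2}$, that is, when $k\le k^*$. Note that $k^*\ge 1$ is an integer, so the value is attained.

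The only step needing care is this last one: converting the quadratic inequality to a floor bound without off-by-one errors. In particular, we must check that negative values of $k$ play no role (they are excluded because $k\ge 0$), and that the floor correctly handles the case where $1+8M$ is a perfect square. In that case $\binom{k^*}{2}=M$ exactly, and the bound is attained.
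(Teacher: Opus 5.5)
Your proof is correct and takes essentially the same approach as the paper: you use Theorem~\ref{thm:existential} to reduce to the three downward-closed constraints $k\le A$, $k\le B$, $\binom{k}{2}\le\min(C,AB-C)$, and then solve $k^2-k-2M\le 0$ for nonnegative integers $k$. You are more careful than the paper about monotonicity, the negative root, and the floor, but the argument is the same.
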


\begin{proof}
By Theorem~\ref{thm:existential}, the largest admissible \(k\) is bounded by
\(A\), by \(B\), and by the largest \(k\) satisfying
$\binom{k}{2}\le \min(C,AB-C).$
Solving \(k(k-1)/2\le M\), for \(M=\min(C,AB-C)\), gives
$k\le
 \left\lfloor
 \frac{1+\sqrt{1+8M}}{2}
 \right\rfloor.$
\end{proof}

Furthermore, we draw the following algorithmic consequences.
\begin{corollary}
Given nonnegative integers $A$, $B$, $C$, and $k$ in binary, one can decide in polynomial time whether \(L_{A,B,C}\) is \(k\)-$\exists$-universal.
When \(L_{A,B,C}\ne\emptyset\), we can compute \(\iota_{\exists}(L_{A,B,C})\) in polynomial time.

\end{corollary}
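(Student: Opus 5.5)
The plan is to read both parts of the statement directly off Theorem~\ref{thm:existential} and Corollary~\ref{cor:existential-index}. The only real work is arithmetic on binary-encoded integers, together with care about the size of the output.

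For the decision problem, I first test nonemptiness by checking $0\le C\le AB$; if this fails, the answer is no. Otherwise, by Theorem~\ref{thm:existential}, the language is $k$-$\exists$-universal exactly when $A\ge k$, $B\ge k$, $C\ge\binom{k}{2}$ and $AB-C\ge\binom{k}{2}$. The quantities $AB$, $AB-C$ and $\binom{k}{2}=k(k-1)/2$ are products, differences and a halving of integers whose bitlengths are bounded by the input length. Schoolbook arithmetic therefore computes them in time quadratic in the input bitlength, and the comparisons take linear time.

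For the index, I evaluate the formula of Corollary~\ref{cor:existential-index}. I compute $M=\min(C,AB-C)$ exactly, and then I need $\lfloor (1+\sqrt{1+8M})/2\rfloor$. This is the one step that needs a moment's care, because the formula involves a real square root. I avoid it and instead compute the largest integer $k$ with $k(k-1)/2\le M$ by binary search over $k\in[0,M+1]$. Each candidate is tested with one exact multiplication and comparison, and the search takes $O(\log M)$ iterations, which is polynomial in the bitlength. Equivalently, I could compute $s=\lfloor\sqrt{1+8M}\rfloor$ by an integer square root (Newton's method or binary search) and output $\lfloor(1+s)/2\rfloor$. This is valid because $\lfloor (1+x)/2\rfloor=\lfloor(1+\lfloor x\rfloor)/2\rfloor$ for real $x\ge0$.

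Finally, I take the minimum of $A$, $B$ and this value. Every number involved has bitlength $O(\log A+\log B+\log C)$, so the whole procedure runs in polynomial time. I expect no genuine obstacle beyond replacing the real-valued square root with exact integer arithmetic, as above.
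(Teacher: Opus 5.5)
Your proposal is correct and follows the paper's argument. You check $0\le C\le AB$, reduce the decision to the comparisons of Theorem~\ref{thm:existential}, and reduce the index to the formula of Corollary~\ref{cor:existential-index} computed by exact integer arithmetic. Your binary search (or the identity $\lfloor(1+x)/2\rfloor=\lfloor(1+\lfloor x\rfloor)/2\rfloor$) simply spells out the integer square root step that the paper states in one line.
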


\begin{proof}
First we check whether \(0\le C\le AB\). If this test fails, then the input does not correspond to any word, i.e., \(L_{A,B,C}\) is empty, so the answer is negative. The decision test consists of integer comparisons involving \(A,B,C,AB-C\) and
\(k(k-1)/2\).  The index formula uses an integer square root of
\(1+8\min(C,AB-C)\), which is computable in polynomial time in the bitlength
of
the input.
\end{proof}

\section{$\forall$-universality}\label{sec:universal}

We now turn to the $\forall$-universality index.  The first observation is
that this index is always at most \(3\) for nonempty binary Parikh matrix
languages.

\begin{lemma}\label{lem:at-most-three}
There exists no nonempty \(4\)-$\forall$-universal language \(L_{A,B,C}\).  Consequently,
$\iforall(L_{A,B,C})\le 3.$
\end{lemma}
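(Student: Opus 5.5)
The plan is to show that every nonempty language \(L_{A,B,C}\) contains an explicit word that avoids one fixed binary pattern of length four, namely \(baab\). By definition such a word is not \(4\)-universal, so \(L_{A,B,C}\) is not \(4\)-\(\forall\)-universal. Since \(k\)-universality implies \(k'\)-universality for all \(k'\le k\), this gives \(\iforall(L_{A,B,C})\le 3\).

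\emph{Construction.} If \(A=0\), then \(C=0\) and \(L_{A,B,C}=\{b^B\}\), which contains no \(a\) at all, so we are done. Assume \(A\ge 1\) and \(0\le C\le AB\). Write \(C=pB+r\) with integers \(0\le p\le A-1\) and \(0\le r\le B\). This is always possible: take \(p=\min(\lfloor C/B\rfloor, A-1)\) when \(B>0\), including the edge case \(C=AB\), \(p=A-1\), \(r=B\); take \(p=r=0\) when \(B=0\). Define
\[
w \;=\; a^{p}\, b^{B-r}\, a\, b^{r}\, a^{A-1-p}.
\]
Then \(|w|_a=A\) and \(|w|_b=B\). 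Counting \(ab\)-pairs: each of the first \(p\) letters \(a\) precedes all \(B\) letters \(b\), the middle \(a\) precedes exactly \(r\) of them, and the trailing letters \(a\) precede none. Hence \(|w|_{ab}=pB+r=C\), so \(w\in L_{A,B,C}\).

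\emph{Avoidance of \(baab\).} Suppose \(baab\preceq w\), and fix an occurrence. Its first letter \(b\) lies in \(b^{B-r}\) or in \(b^{r}\). The letters \(a\) occurring after it are only the middle \(a\) and the block \(a^{A-1-p}\). So its two \(a\)'s include at least one from the final block \(a^{A-1-p}\). After that block there is no \(b\), which contradicts the last letter of the pattern. Hence \(baab\not\preceq w\), and \(w\) is not \(4\)-universal, which can also be phrased via Lemma~\ref{lem:binary-arches}.

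The only point needing care is the choice of \(p,r\): we must show that the values \(pB+r\) cover the whole interval \([0,AB]\) while keeping \(p\le A-1\), so that the middle \(a\) exists. This holds because the intervals \([pB,pB+B]\) for \(p=0,\dots,A-1\) tile \([0,AB]\). Everything else is direct counting.
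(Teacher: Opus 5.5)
Your proof is correct and follows the same idea as the paper: your witness $a^{p}b^{B-r}ab^{r}a^{A-1-p}$ is exactly the lexicographically minimal representative of the form $a^*b^*ab^*a^*$ that the paper obtains by citing Salomaa and Yu, and showing it avoids $baab$ is a concrete certificate of the paper's remark that such words have at most three arches. The real gain of your version is that it is self-contained: you verify membership in $L_{A,B,C}$ by directly counting $ab$-pairs rather than relying on the external characterization.
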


\begin{proof}
If \(A=0\) or \(B=0\), then no word in \(L_{A,B,C}\) is even
\(1\)-universal, so the claim is immediate.  

For binary words, a concrete characterization of lexicographically minimal (and maximal) representatives of a Parikh matrix language is given in~\cite[Theorem 3]{SalYu10}. These descriptions can also be deduced from the languages given in~\cite[page 9]{DingSalomaa2006}. 

In particular, the minimal representative is an element of the language $a^*b^*ab^*a^*$, while the maximal is of the form $b^*a^*ba^*b^*$. Since those two contain only words with at most $3$ arches, together with Lemma~\ref{lem:binary-arches}, we conclude.
%
%
%
\end{proof}

%

It is trivial that a nonempty language $L_{A,B,C}$ is $1$-$\forall$-universal if and only if every one of its words contains both symbols of the alphabet.

\begin{proposition}\label{prop:forall-one-two}
A nonempty language \(L_{A,B,C}\) is $2$-$\forall$-universal if and only if $A,B \geq 2$ and $\min(C,AB-C)>0$.
\end{proposition}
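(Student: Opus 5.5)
We prove the two directions separately, using Lemma~\ref{lem:binary-arches} to reduce $2$-universality to containing a subsequence from $\{ab,ba\}^2=\{abab,abba,baab,baba\}$. Equivalently, a binary word $w$ is $2$-universal if and only if it contains each of $aa$, $bb$, $ab$, $ba$ as subsequences. The forward inclusion holds by definition. For the converse, look at the first letter of $w$, say $a$. If $w$ contains $aa$ and $b$, then after the first letter the remaining suffix contains $a$ and $b$, hence contains $ab$ or $ba$, and prepending the first $a$ gives $aab$ or $aba$. Checking all four length-two patterns directly is cleaner, and we will use that form.

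\emph{Necessity.} Suppose every word of $L_{A,B,C}$ is $2$-universal; since the language is nonempty, pick any $w$ in it. Then $aa\preceq w$ forces $A\ge 2$, and likewise $B\ge 2$. Moreover $ab\preceq w$ gives $C=|w|_{ab}>0$, and $ba\preceq w$ gives $AB-C=|w|_{ba}>0$ by Eq.~\ref{eq:a_x_b}. Hence $\min(C,AB-C)>0$.

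\emph{Sufficiency.} Assume $A,B\ge 2$, $C>0$ and $AB-C>0$, and take an arbitrary $w\in L_{A,B,C}$. The counts $A\ge2$ and $B\ge2$ give $aa,bb\preceq w$. The condition $|w|_{ab}=C>0$ gives $ab\preceq w$, and $|w|_{ba}=AB-C>0$ gives $ba\preceq w$. So $w$ contains all of $\{a,b\}^2$ as subsequences and is $2$-universal. Since $w$ was arbitrary, $L_{A,B,C}$ is $2$-$\forall$-universal.

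There is no real obstacle here: the whole argument rests on the observation that $2$-universality is exactly containment of all four length-two words, and that each of these is detected by one of the Parikh matrix entries $A$, $B$, $C$, $AB-C$. The only point needing care is stating that observation precisely, which follows directly from the definition of $k$-universality with $k=2$, without even needing the arch characterization.
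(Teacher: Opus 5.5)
Your proof is correct and follows the paper's argument: $2$-universality means containing $aa$, $bb$, $ab$ and $ba$, and these are detected, uniformly over all words of $L_{A,B,C}$, by $A\ge 2$, $B\ge 2$, $C>0$ and $AB-C=|w|_{ba}>0$, respectively. You should delete the opening detour through Lemma~\ref{lem:binary-arches} together with the first-letter argument: that argument produces $aab$ or $aba$ rather than a word of $\{ab,ba\}^2$, so it establishes nothing, and as you note at the end, the characterization is simply the definition with $k=2$.
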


\begin{proof}
%
A binary word is \(2\)-universal if and only if it contains all of the four words $aa$, $ab$, $ba$, and $bb$ as subsequences.
The pattern \(aa\) occurs in every word of \(L_{A,B,C}\) if and only if
\(A\ge2\), and \(bb\) occurs in every word if and only if \(B\ge2\).  The
pattern \(ab\) occurs in every word if and only if \(C=|w|_{ab}>0\), while
\(ba\) occurs in every word if and only if \(|w|_{ba}>0\).
\end{proof}


For \(3\)-$\forall$-universality we need to understand when a representative of
\(L_{A,B,C}\) can avoid a fixed binary word of length \(3\).

\begin{lemma}\label{lem:avoidance-table}
%
Let \(L_{A,B,C}\) be nonempty. For each \(u\in\{a,b\}^3\), there exists a word $w$ that does not have $u$ as a subsequence, if and only if the following avoidability conditions hold (where \(D=AB-C\)):
\[
\begin{array}{|c||c|c|c|c|c|c|c|c|}
\hline
\text{pattern }u &aaa & bbb & aab & abb & baa & bba & aba & bab\\
\hline
\text{condition} &A<3
& B<3
& C\le B
& C\le A
& D\le B
& D\le A
& A=0\text{ or }A\mid C
& B=0\text{ or }B\mid C\\
\hline
\end{array}
\]
In particular, if \(A,B\ge3\), the last two conditions reduce to
\(A\mid C\) and \(B\mid C\), respectively.
\end{lemma}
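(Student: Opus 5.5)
I will prove the eight entries of the table one at a time. Each case combines two things: a structural description of the words avoiding \(u\) as a subsequence, and the gap representation \(w=b^{x_0}ab^{x_1}a\cdots ab^{x_A}\) with \(\sum x_i=B\) and \(C=\sum_i i x_i\) (or the symmetric representation by \(a\)-gaps between the \(b\)'s). Reversal and the swap \(a\leftrightarrow b\) relate the cases: reversal exchanges \(C\) and \(D\), and swapping the letters exchanges \(A\) with \(B\) and \(C\) with \(D\). Hence it suffices to treat \(aaa\), \(aab\) and \(aba\) directly and derive the other five by symmetry.

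\emph{The patterns \(aaa\), \(bbb\).} Every word contains \(aaa\) if and only if \(A\ge 3\), and contains \(bbb\) if and only if \(B\ge3\); nonemptiness of the language supplies a witness when \(A<3\) (respectively \(B<3\)).

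\emph{The pattern \(aab\).} A word avoids \(aab\) exactly when every \(b\) lies before the second \(a\), i.e.\ \(x_2=\cdots=x_A=0\) (when \(A\ge2\)). Then \(C=x_1\le B\). Conversely, if \(C\le B\) then \(x_0=B-C\), \(x_1=C\) is a valid word. If \(A\le1\), no word contains \(aab\), and nonemptiness already forces \(C\le AB\le B\), so the condition is consistent. The pattern \(abb\) is handled dually (all \(a\)'s precede the second-to-last \(b\)), giving \(C\le A\). The patterns \(baa\) and \(bba\) follow by reversal, with \(D\) in place of \(C\).

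\emph{The pattern \(aba\).} I expect this case to need the most care. A word avoids \(aba\) exactly when all \(b\)'s lie in the outer gaps: \(x_1=\cdots=x_{A-1}=0\). Then \(C=A\cdot x_A\), so \(A\mid C\). Conversely, if \(A\mid C\) then \(C/A\le B\) because \(C\le AB\), so \(w=b^{B-C/A}a^Ab^{C/A}\) works. The degenerate case \(A=0\) gives \(w=b^B\), which trivially avoids \(aba\); when \(A\ge 1\), divisibility is the exact condition. The pattern \(bab\) is the letter-swapped version: it requires all \(a\)'s to lie outside the interior \(b\)-gaps, giving \(D\equiv 0\pmod B\), which is equivalent to \(B\mid C\) since \(C+D=AB\).

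Finally, for \(A,B\ge3\) the disjuncts \(A=0\) and \(B=0\) are false, so the last two conditions reduce to \(A\mid C\) and \(B\mid C\), as the statement claims.
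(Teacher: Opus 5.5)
Your proof is correct and is essentially the paper's argument: the same gap representation for $aab$ and the same normal forms $b^{B-q}a^Ab^q$ and $a^pb^Ba^{A-p}$ for $aba$ and $bab$. The only difference is that you obtain $abb$, $baa$, $bba$ and $bab$ from the reversal and letter-swap symmetries (which you state correctly) instead of repeating the counting argument.

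There is one slip to fix in your parenthetical for $abb$. A word avoids $abb$ exactly when every $a$ \emph{follows} the second-to-last $b$, so that each $a$ has at most one $b$ after it. If all $a$'s \emph{preceded} it, the word would contain $abb$ whenever $A\ge1$ and $B\ge2$. Your condition $C\le A$ is still the right one, since it is the image of $C\le B$ under reversal followed by swapping $a$ and $b$.
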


\begin{proof}
The first two conditions are immediate: every word with at least three \(a\)'s
contains \(aaa\), and every word with at least three \(b\)'s contains \(bbb\).

For \(aab\), write a word as
$w=b^{x_0}ab^{x_1}a\cdots ab^{x_A}.$
The letters \(b\) within the $b^{x_i}$ blocks have exactly \(i\) letters \(a\) before them.
Thus \(w\) avoids \(aab\) if and only if every \(b\) has at most one \(a\)
before it, i.e., if and only if all \(b\)'s lie in the first two blocks.  This is
possible with total \(ab\)-count \(C\) if and only if \(C\le B\).  Hence the
third condition follows.

The condition for \(abb\) is dual.  A word avoids \(abb\) if and only if every
letter \(a\) has at most one \(b\) after it.  Since
$|w|_{ab}=\sum\limits_{\text{letters }a} \#\{b\text{ after this }a\},$
this is possible if and only if \(C\le A\).

Similarly, a word avoids \(baa\) if and only if every letter \(b\) has at most
one \(a\) after it.  Since \(D=|w|_{ba}\), this is possible if and only if
\(D\le B\).  A word avoids \(bba\) if and only if every letter \(a\) has at
most one \(b\) before it, which is possible if and only if \(D\le A\).

A word avoids \(aba\) if and only if
no \(b\) lies between two occurrences of \(a\).  Equivalently, every
\(b\) is either before all \(a\)'s or after all \(a\)'s.  If \(A=0\), this is
automatic.  If \(A>0\), such a word has the form
$b^{B-q}a^A b^q$,
for some \(0\le q\le B\), and its \(ab\)-count is \(qA\).  Hence such a word
exists if and only if \(A\mid C\).

The proof for \(bab\) is symmetric.  A word avoids \(bab\) if and only if
every \(a\) is either before all \(b\)'s or after all \(b\)'s.  If \(B=0\),
this is automatic.  If \(B>0\), such a word has the form
$a^p b^B a^{A-p}$,
and its \(ab\)-count is \(pB\).  Hence such a word exists if and only if
\(B\mid C\).
\end{proof}

\begin{theorem}\label{thm:forall-three}
A nonempty language \(L_{A,B,C}\) is $3$-$\forall$-universal if and only if
$A, B\ge 3$, 
$\min(C,AB-C)>\max(A,B)$,
and
$A\nmid C$ and $B\nmid C.$
\end{theorem}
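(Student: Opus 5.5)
The plan is to reduce the statement directly to the avoidance table of Lemma~\ref{lem:avoidance-table}. A word is $3$-universal exactly when it contains all eight words of $\{a,b\}^3$ as subsequences. So $L_{A,B,C}$ is $3$-$\forall$-universal if and only if, for every $u\in\{a,b\}^3$, no word of $L_{A,B,C}$ avoids $u$. By Lemma~\ref{lem:avoidance-table}, this happens if and only if all eight avoidability conditions fail simultaneously. The proof then amounts to checking that the conjunction of the eight negated conditions is equivalent to the three conditions in the statement.

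We negate the conditions one pattern at a time.
\begin{itemize}
\item For $aaa$ and $bbb$, the negations are $A\ge3$ and $B\ge3$.
\item For $aab$ and $abb$, the negations are $C>B$ and $C>A$, which together say $C>\max(A,B)$.
\item For $baa$ and $bba$, the negations give $D>\max(A,B)$ with $D=AB-C$.
\end{itemize}
Combining the last two items yields $\min(C,AB-C)>\max(A,B)$. For $aba$ the negation is $A\neq0$ and $A\nmid C$, and for $bab$ it is $B\ne0$ and $B\nmid C$. Once $A,B\ge3$ is known, these reduce to $A\nmid C$ and $B\nmid C$, as the final remark of Lemma~\ref{lem:avoidance-table} already notes.

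The argument then runs in both directions. If $L_{A,B,C}$ is $3$-$\forall$-universal, every condition of the table fails, and the derivation above gives exactly the listed conditions. Conversely, if the listed conditions hold, each of the eight table conditions fails. Hence every word of $L_{A,B,C}$ contains every length-$3$ pattern, so every word is $3$-universal.

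Since the lemma is an if-and-only-if, there is no real obstacle; the only care needed is the degenerate $A=0$ or $B=0$ clauses for $aba$ and $bab$. These are harmless, because $A,B\ge3$ is forced by the $aaa$ and $bbb$ rows before the divisibility rows are used. It may also be worth remarking that $3$-universality implies $2$-universality, so Proposition~\ref{prop:forall-one-two} is implied by the conditions and need not be invoked separately. Our condition $\min(C,AB-C)>\max(A,B)\ge3$ is indeed stronger than the condition $\min(C,AB-C)>0$ given there.
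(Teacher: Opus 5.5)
Your proposal is correct and matches the paper's proof. Both reduce $3$-$\forall$-universality to the simultaneous failure of all eight avoidability conditions in Lemma~\ref{lem:avoidance-table}, use the $aaa$ and $bbb$ rows to force $A,B\ge 3$, and then negate the remaining six conditions in pairs to obtain $\min(C,AB-C)>\max(A,B)$, $A\nmid C$ and $B\nmid C$. Your explicit treatment of the $A=0$ and $B=0$ clauses is a small piece of care that the paper leaves implicit.
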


\begin{proof}
A binary word is \(3\)-universal if and only if it contains all eight words in
\(\{a,b\}^3\).  Thus \(L_{A,B,C}\) is \(3\)-$\forall$-universal if and only if
no word in \(L_{A,B,C}\) avoids any of these eight patterns.

By Lemma~\ref{lem:avoidance-table}, avoidance of \(aaa\) is possible if and
only if \(A<3\), and avoidance of \(bbb\) is possible if and only if \(B<3\).
Therefore we must have \(A\ge3\) and \(B\ge3\).

Under this assumption, the remaining six avoidance conditions are
\[
 C\le B,\qquad C\le A,\qquad AB-C\le B,\qquad AB-C\le A,\qquad A\mid C,\qquad B\mid C.
\]
The negation of the first two is
$C>\max(A,B),$
the negation of the next two is
$AB-C>\max(A,B),$
and the negation of the last two is
$A\nmid C$ and $B\nmid C.$
This leads to the stated condition.
\end{proof}


Combining the preceding results yields the $\forall$-universality index.

\begin{theorem}\label{thm:forall-index}
For a nonempty language \(L_{A,B,C}\) we have 
\[
\iota_{\forall}(L_{A,B,C})=
\begin{cases}
3, & \text{if } A,B\ge 3,\ \min(C,AB-C)>\max(A,B),\
A\nmid C,\ B\nmid C,\\
2, & \text{if the previous condition fails, and }
A,B\ge2,\ \min(C,AB-C)>0,\\
1, & \text{if the previous conditions fail, and } A,B\ge1,\\
0, & \text{if } A=0 \text{ or } B=0.
\end{cases}
\]

\end{theorem}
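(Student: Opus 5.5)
The plan is to assemble the index from the threshold results already established, using that $k$-$\forall$-universality is monotone in $k$: every $k$-universal word is $(k-1)$-universal. Consequently $\iforall(L_{A,B,C})$ is the largest $k$ for which the $k$-$\forall$-universality criterion holds, and it suffices to test $k=3,2,1$ in decreasing order. By Lemma~\ref{lem:at-most-three} no larger value can occur, so the case analysis is exhaustive.

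First, the value $3$. Its condition is exactly Theorem~\ref{thm:forall-three}, so $\iforall=3$ precisely in the first case; Lemma~\ref{lem:at-most-three} excludes anything larger.

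Next, the value $2$. If the first condition fails, then $\iforall\le 2$. Proposition~\ref{prop:forall-one-two} then gives $\iforall=2$ exactly when $A,B\ge 2$ and $\min(C,AB-C)>0$.

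Then the value $1$. If both previous conditions fail, we only need the $1$-$\forall$-universality criterion, namely that every word contains both letters. For words in $L_{A,B,C}$ this is equivalent to $A\ge1$ and $B\ge1$, since all words share the letter counts $A$ and $B$. If instead $A=0$ or $B=0$, no word contains both letters, so the index is $0$.

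The only delicate point is consistency of the last case. When $A=0$ or $B=0$, the conditions for $3$ and $2$ fail automatically because they require $A,B\ge2$. So the fourth case is reached exactly when the third fails, and the case distinction is well defined and exhaustive. There is no substantial obstacle; the proof is a bookkeeping combination of Theorem~\ref{thm:forall-three}, Proposition~\ref{prop:forall-one-two}, and Lemma~\ref{lem:at-most-three}.
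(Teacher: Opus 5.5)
Your proposal is correct and follows essentially the same route as the paper: bound the index by $3$ via Lemma~\ref{lem:at-most-three}, then obtain the thresholds for $3$, $2$, and $1$ from Theorem~\ref{thm:forall-three}, Proposition~\ref{prop:forall-one-two}, and the both-letters criterion. Your remarks on monotonicity in $k$ and on the consistency of the case $A=0$ or $B=0$ only spell out what the paper's proof uses implicitly.
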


\begin{proof}
By Lemma~\ref{lem:at-most-three}, the universal index is at most \(3\).
The condition for value at least \(3\) is exactly
Theorem~\ref{thm:forall-three}.  If this condition fails, the index is at most
\(2\), and Proposition~\ref{prop:forall-one-two} gives the condition for
being at least \(2\).  If that condition also fails, the index is at most \(1\). It is at least \(1\)
exactly when every word contains both letters, i.e., when \(A,B\ge 1\).
If \(A=0\) or \(B=0\), no word in the language is \(1\)-universal over the
fixed alphabet \(\{a,b\}\), so the value is \(0\).
\end{proof}

As a consequence we get the following algorithmic result.

\begin{corollary}
Given nonnegative integers $A$, $B$, $C$, and $k$ in binary, one can decide in polynomial time whether
\(L_{A,B,C}\) is \(k\)-$\forall$-universal.  One can also compute
\(\iforall(L_{A,B,C})\) in polynomial time.
\end{corollary}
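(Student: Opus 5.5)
The plan is to reduce both claims to Theorem~\ref{thm:forall-index}, which already gives a closed-form case distinction for \(\iforall(L_{A,B,C})\). First we check nonemptiness, i.e.\ \(0\le C\le AB\). Multiplying two binary-encoded integers and comparing the result with \(C\) is polynomial in the bitlength. If the language is empty, we report this; following the convention fixed earlier, \(\iforall\) is only considered for nonempty languages.

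For a nonempty language we evaluate the conditions of Theorem~\ref{thm:forall-index} in order:
\begin{itemize}
\item the comparisons \(A,B\ge 3\), \(A,B\ge 2\), \(A,B\ge 1\) and \(\min(C,AB-C)>0\);
\item the comparison \(\min(C,AB-C)>\max(A,B)\);
\item the divisibility tests \(A\nmid C\) and \(B\nmid C\).
\end{itemize}
Each divisibility test is a single integer division with remainder, and the ill-defined case \(A=0\) or \(B=0\) cannot arise here, since those tests are only reached once \(A,B\ge 3\). All of these are arithmetic operations on numbers of polynomial bitlength, so schoolbook arithmetic gives polynomial time. This computes \(\iforall(L_{A,B,C})\in\{0,1,2,3\}\) exactly.

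To decide \(k\)-\(\forall\)-universality for a given \(k\), we use the fact that the property is monotone in \(k\): a \(k\)-universal word is \(k'\)-universal for every \(k'\le k\). So the language is \(k\)-\(\forall\)-universal if and only if \(k\le \iforall(L_{A,B,C})\). Comparing the binary-encoded \(k\) with a constant at most \(3\) is trivial, and Lemma~\ref{lem:at-most-three} guarantees that every \(k\ge 4\) is answered negatively.

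There is no real obstacle. The only points needing care are that \(k\) may be exponentially large in value, which the comparison handles without iterating up to \(k\), and that the empty language is treated separately via the initial test.
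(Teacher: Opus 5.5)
Your proposal is correct and follows essentially the same route as the paper. You test nonemptiness via $0\le C\le AB$, then evaluate the constant number of comparisons and divisibility tests from Theorem~\ref{thm:forall-index}; your monotonicity remark makes explicit how a binary-encoded $k$ is compared against an index in $\{0,1,2,3\}$, which the paper leaves implicit. The only other difference is cosmetic: for an empty language the paper answers the decision question as vacuously YES, whereas you merely flag emptiness, which is equally consistent with the paper's convention that $\forall$-universality is only considered for nonempty languages.
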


\begin{proof}
First check nonemptiness, i.e., \(0\le C\le AB\). For an empty language, the answer is vacuously YES.  For nonempty languages,
Theorem~\ref{thm:forall-index} gives a constant number of arithmetic tests:
comparisons involving \(A,B,C,AB-C\), and divisibility tests \(A\mid C\) and
\(B\mid C\).  These are polynomial-time operations on binary integers.
\end{proof}


\section{An additive approximation scheme for fixed-pattern containment}\label{sec:ratio}
For a fixed word \(w\) and a finite language $L$, define the subsequence probability of \(w\) with respect to \(L\) as the ratio $p_w(L)=\frac{|\{u\in L \mid w\preceq u\}|}{|L|},$ that is, the probability that a uniformly randomly chosen word from \(L\) contains \(w\) as a subsequence. 

In our binary setting, we restrict notation and say that for a binary word $w$ and a nonempty language $L_{A,B,C}$, described via nonnegative integers $A, B, C$, we have
\[
p_w(A,B,C)=
\frac{|\{u\in L_{A,B,C} \mid w\preceq u\}|}{|L_{A,B,C}|}.
\]


The extremal cases of the problem considered in this section are the so-called $\exists$-\textsc{Subsequence} and $\forall$-\textsc{Subsequence} problems~\cite{FazekasKMMS24}, defined as follows:
\begin{itemize}
    \item $\exists$-\textsc{Subsequence}: For a language $L$ and a word $w$, decide whether there exists $u\in L$ such that $w\preceq u$.
    \item $\forall$-\textsc{Subsequence}: For a language $L$ and a word $w$, decide whether $w\preceq u$ holds, for all $u\in L$.
\end{itemize}

In terms of the ratio $p_w(L)$ defined above, the $\exists$-\textsc{Subsequence} and $\forall$-\textsc{Subsequence} problems ask to decide whether $p_w(L)>0$ and $p_w(L)=1$, respectively.
First we show that $\forall$-\textsc{Subsequence} is intractable in the case of binary 2-deck languages, directly implying that it is not expected that the ratio $p_w(A,B,C)$ is computable exactly in polynomial time unless factoring integers is easy.


We use a simple reduction showing that the complement of the
\(\forall\)-\textsc{Subsequence} problem (``is there a word in the language of a given Parikh matrix that does \emph{not} contain the given subsequence'') for binary Parikh matrix languages is
\textsc{Factoring*}-hard, even for the fixed pattern \(abab\). Here
\(\textsc{Factoring*}\) denotes the following interval-divisor problem: given
positive integers \(U,V,N\), with \(U\le V\le N\), decide whether \(N\) has a
divisor in the interval \([U,V]\). It is easy to see that the usual decision version of factoring integers (does the input integer $N$ have a prime factor at most $V$?) reduces trivially to $\textsc{Factoring*}$. Therefore $\textsc{Factoring*}$ is not expected to have a polynomial-time algorithm, since factoring integers is a candidate for an NP-intermediate problem (in NP$\ \cap\ $coNP, but believed to be outside P).
\begin{theorem}
The complement of $\forall$-\textsc{Subsequence} for binary Parikh matrix
languages is \(\textsc{Factoring*}\)-hard, even for the fixed pattern \(abab\).
\end{theorem}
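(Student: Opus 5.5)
The plan is to characterize exactly which words of \(L_{A,B,C}\) avoid the fixed pattern \(abab\), turn the avoidance condition into a factorization condition \(C=p\cdot y\) with bounded factors, and then encode an instance of \(\textsc{Factoring*}\) into \(A,B,C\).

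\textbf{Step 1: structure of \(abab\)-avoiding words.} I will show that a binary word \(u\) avoids \(abab\) iff \(u\in b^*a^*b^*a^*\). If \(u\) has this form, any embedding of \(abab\) would need four alternating runs starting with \(a\), but \(b^*a^*b^*a^*\) has at most three runs after its first \(a\). Conversely, suppose \(u\) avoids \(abab\). Take the first \(a\) of \(u\), then the first \(b\) after it, then the first \(a\) after that. No \(b\) can follow this last \(a\), so \(u\) splits as \(b^x a^p b^y a^q\). This argument also covers the degenerate cases where some of these letters do not exist.

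\textbf{Step 2: the \(ab\)-count of such words.} For \(u=b^x a^p b^y a^q\) with \(x+y=B\) and \(p+q=A\), only the \(p\) letters \(a\) of the first \(a\)-block precede any \(b\), and each of them precedes exactly \(y\) letters \(b\). Hence \(|u|_{ab}=py\). Every choice \(0\le p\le A\), \(0\le y\le B\) is realizable. Therefore \(L_{A,B,C}\) contains an \(abab\)-avoiding word iff \(C=py\) for some integers \(0\le p\le A\) and \(0\le y\le B\).

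\textbf{Step 3: the reduction.} Given \(U\le V\le N\), I will output \(A=V\), \(B=\lfloor N/U\rfloor\), \(C=N\), and the pattern \(abab\); this is polynomial-time computable. I then check that the two answers agree.
\begin{itemize}
\item If \(d\in[U,V]\) divides \(N\), take \(p=d\le A\) and \(y=N/d\). Since \(y\le N/U\) and \(y\) is an integer, \(y\le B\). Then \(C=py\le AB\), so the language is nonempty and contains an avoiding word.
\item Conversely, suppose \(N=py\) with \(p\le V\) and \(y\le\lfloor N/U\rfloor\). Then \(p=N/y\ge U\), so \(p\) is a divisor of \(N\) in \([U,V]\). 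Since \(N\ge1\), we have \(p,y\ge1\), so no zero cases arise.
\end{itemize}

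\textbf{Main obstacle.} The delicate point is the boundary behaviour when \(N>AB\), where the language is empty. In that case, the first bullet of Step 3 shows that no divisor in \([U,V]\) can exist. The complement of \(\forall\)-\textsc{Subsequence} on an empty language (``some word avoids \(abab\)'') is also NO, so the two answers still agree. Apart from this, the characterization in Step 1 needs the careful run-based argument, including the degenerate cases.
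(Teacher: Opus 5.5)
Your proposal is correct and follows essentially the same route as the paper: the same instance $A=V$, $B=\lfloor N/U\rfloor$, $C=N$, the same characterization of $abab$-avoiding words as $b^*a^*b^*a^*$ with $ab$-count equal to the product of the two middle block lengths, and the same two-direction divisor argument. You spell out the run-based avoidance argument, its degenerate cases, and the agreement of answers when $L_{A,B,C}$ is empty more explicitly than the paper, but the argument is the same.
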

\begin{proof}
Set
$C=N$, $A=V$, $B=\lfloor\frac{N}{U}\rfloor,$
and $w=abab$, so the instance of 
``\emph{do all $v\in L_{A,B,C}$ contain $w$ as a subsequence?}'' created for the \textsc{Factoring*} instance $(U,V,N)$ is \(\langle L_{A,B,C},abab\rangle.
\)
The resulting language \(L_{A,B,C}\) may be empty, but this is not a problem; the universal statement becomes vacuously true, and the corresponding instance of the complement problem is negative.

A binary word avoids \(abab\) as a subsequence if and only if its print, that is, the word obtained by replacing every maximal unary block by a single occurrence of its letter,
avoids
\(abab\). Equivalently, it has the form \(b^*a^*b^*a^*\), possibly allowing some of the
four unary blocks to be empty. Thus, if $abab$ is not a common subsequence of all words in $L_{A,B,C}$, then one of those words has the form $b^*a^*b^*a^*.$
If $N=dq=C$, where $U\le d\le V$ and $q\le \lfloor\frac{N}{U}\rfloor$, then the word $b^{B-q}a^{d}b^{q}a^{A-d}$ is in $L_{A,B,C}$ and it does not contain $abab$ as subsequence.

Conversely, suppose some $x\in L_{A,B,C}$ does not contain $abab$, i.e., it has the form $b^{B-q}a^db^qa^{A-d}$, for some $d\le A$ and $q\le B$. This means that the number of $ab$'s in the word is exactly $dq$, so $C=dq$. Moreover, $q\le \lfloor\frac{N}{U}\rfloor$, so $d=C/q=N/q\geq N/\lfloor\frac{N}{U}\rfloor$, which gives $d\ge U$. Finally, $d\le V$, since it is the length of an $a$-block in a word with $V$ occurrences of $a$. Note that the division by $q$ above is always valid, since \(dq=C=N>0\).
\end{proof}

We showed that the complement of $\forall$-\textsc{Subsequence} is \textsc{Factoring*}-hard, which is at least as hard as factoring integers into primes. To decide factoring, one would test \(L_{A,B,C}=\emptyset\) first and if yes, answer negatively to the factoring instance. If the language is not empty, testing $p_{abab}(A,B,C)=1$ is equivalent to the $\forall$-\textsc{Subsequence} problem. Thus, an exact polynomial-time algorithm for the former would yield a polynomial-time algorithm for
\(\textsc{Factoring*}\). This motivates the additive approximation result below.

In what follows, we prove that, for every fixed \(w\) and error tolerance $\varepsilon$, the ratio \(p_w(A,B,C)\) admits an additive
polynomial-time approximation scheme in the binary encoding of
\(A,B,C\). When $A,B,C$ are fixed and no confusion arises, we streamline the notation further and use only $p_w$ for the ratio with respect to the fixed language $L_{A,B,C}$.

The endpoint cases \(p_w>0\) and \(p_w=1\) correspond respectively to
the \(\exists\)-\textsc{Subsequence} and \(\forall\)-\textsc{Subsequence} problems discussed above.  Thus \(p_w\) interpolates between those two endpoint
questions.  As we have just seen, the \(\forall\)-\textsc{Subsequence} problem can be hard even
for fixed patterns, but that hardness only distinguishes \(p_w=1\) from
\(p_w\le 1-1/|L_{A,B,C}|\), an exponentially small gap.  Hence it does
not preclude constant-error additive approximation.


In line with the geometric approach in~\cite{Richomme26}, we note that every word \(u\in L_{A,B,C}\) corresponds uniquely to a partition
$\lambda=(\lambda_1,\ldots,\lambda_A)\subseteq \{0,\dots,B\}^A$
of area \(C\) (that is, an integer partition of the number $C$), where
$\lambda_i=\#\{b$'s after the $i$th $a$ in $u\}$.
Thus
$B\ge \lambda_1\ge\cdots\ge\lambda_A\ge0,$
 where
$|\lambda|=\lambda_1+\cdots+\lambda_A=C.$

Consequently we have 
\[
|L_{A,B,C}|
=\#\{\lambda\subseteq \{0,\ldots,B\}^A \mid |\lambda|=C\}
=[q^C]\binom{A+B}{A}_q,
\]
where 
\[
\binom{A+B}{A}_q=\prod_{i=1}^A\frac{1-q^{B+i}}{1-q^i}
\]
is the standard Gaussian binomial coefficient, whose correspondence with the number of partitions of the above type is a classical result (see~\cite[Section 1.7]{Stanley}). In other words, this formula gives the coefficient of $q^C$ in this polynomial, which is the number of non-increasing partitions with area $C$ in a bounded box $A\times B$.
The binary
Parikh matrix equivalence classes are equivalently described by this
partition model; e.g., see the binary characterizations of
Fossé--Richomme~\cite{FosRic04} and later uses of the partition viewpoint.


In what follows, a binary word \(u\) is called \(w\)-avoiding if \(w\not\preceq u\).
Let
$Avoid_w(A,B,C)=|\{u\in L_{A,B,C} \mid w\not\preceq u\}|$ be the size of the subset of $L_{A,B,C}$ consisting of $w$-avoiding words.
We first bound \(Avoid_w\) by a fixed polynomial in the smaller of the two subsequences' pair counts $|ab|=C$ and $|ba|=AB-C.$ 
In what follows, a \emph{template} is an alternating binary word, that is, $\sigma_1 \sigma_2\cdots \sigma_r$, for some $r$, where $\sigma_i,\sigma_{i+1}\in\{a,b\}$ and $\sigma_i\neq \sigma_{i+1}$, for $1\le i<r$. A word $u$ matches the template, if there exist positive integers $x_1,x_2,\ldots,x_r$ such that $u=\sigma_1^{x_1} \sigma_2^{x_2}\cdots \sigma_r^{x_r}$.

\begin{lemma}\label{lem:bounded-run-avoiders}
For every fixed binary pattern \(w\), there exists a constant \(r_w\)
such that every $w$-avoiding binary word matches one of finitely many
templates
$\sigma_1\sigma_2\cdots\sigma_r$ with 
$r\le r_w$,
where \(\sigma_i, \sigma_{i+1}\in\{a,b\}\) with $\sigma_i\neq \sigma_{i+1}$, for $1\le i<r$, and
the finite set of templates depends only on \(w\).
\end{lemma}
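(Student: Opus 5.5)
The plan is to bound the number of maximal unary blocks (runs) of any $w$-avoiding binary word by a constant depending only on $w$. The finitely many templates are then just all alternating words $\sigma_1\cdots\sigma_r$ with $r\le r_w$.

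\emph{Step 1 (reduction to prints).} Every binary word $u$ has a unique run decomposition $u=\sigma_1^{x_1}\sigma_2^{x_2}\cdots\sigma_r^{x_r}$ with $x_i\ge 1$ and consecutive $\sigma_i$ distinct. Here $r$ is the number of maximal blocks, and the print $\sigma_1\cdots\sigma_r$ is an alternating word that is a subsequence of $u$. So $u$ matches the template given by its own print. It therefore suffices to show that if $r$ is large, then $w\preceq\sigma_1\cdots\sigma_r$, and hence $w\preceq u$.

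\emph{Step 2 (long alternating words contain $w$).} Let $m=|w|$ and take $r\ge 2m$. The alternating word $\sigma_1\cdots\sigma_r$ contains $m$ consecutive disjoint factors of length two, namely $\sigma_1\sigma_2,\ \sigma_3\sigma_4,\ldots,\sigma_{2m-1}\sigma_{2m}$. Each of these is $ab$ or $ba$. Following the argument of Lemma~\ref{lem:binary-arches}, we pick the $i$-th letter of $w$ from the $i$-th factor, which embeds $w$. Hence every $w$-avoiding word has fewer than $2|w|$ runs. We may therefore set $r_w=2|w|-1$; the sharper value $r_w=|w|$ also works, by greedy matching, since each further run advances the match of $w$ by at least one letter. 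The set of templates is all alternating words of length at most $r_w$, and there are at most $2r_w$ of them, which depends only on $w$.

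\emph{Step 3 (edge cases).} Any word, including the empty word if it is counted, matches either the empty template or a template of length at most $r_w$. If $w$ is empty, then no word avoids it and the claim holds vacuously.

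There is no real obstacle here. The only point needing care is Step 2, where we check that the pairing of runs indeed gives an embedding. The argument rests only on the arch idea of Lemma~\ref{lem:binary-arches}, and the exact value of the constant is irrelevant for later use.
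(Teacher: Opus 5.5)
Your proof is correct and gives the same constant $r_w=2|w|-1$ as the paper, but by a different route. The paper runs the greedy subsequence automaton of $w=w_1\cdots w_\ell$ on a $w$-avoiding word. The automaton advances at most $\ell-1$ times, and between advances it can only read the letter other than the one it is waiting for. So the word splits into at most $2\ell-1$ unary pieces. You instead pass to the print and observe that an alternating word of length $2|w|$ is a concatenation of $|w|$ arches from $\{ab,ba\}$, hence $|w|$-universal by the argument of Lemma~\ref{lem:binary-arches}.

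Your version is more uniform. It shows that every binary word with at least $2m$ runs contains \emph{every} word of length $m$, so the template family depends only on $|w|$, and it reuses an earlier lemma. The automaton argument is pattern-sensitive instead. Tracking which letter is read between advances gives the actual shape of the avoiders, for example $b^*a^*b^*a^*$ (at most $4$ rather than $7$ runs) for $w=abab$. In this way it can give much smaller bounds for specific $w$.

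One correction: your parenthetical claim that $r_w=|w|$ also works is false. For $w=aa$, the word $bab$ has three runs but only one $a$, so it avoids $w$. More generally, $(ba)^{m-1}b$ avoids $a^m$ and has $2m-1$ runs, so $2|w|-1$ cannot be improved in the worst case. The justification fails because a run of the letter not currently awaited does not advance the greedy match at all. Only every second run is guaranteed to advance, which is exactly what yields $2|w|$. Since the lemma only needs some constant, simply delete that remark. (Counting the empty template, there are $2r_w+1$ templates rather than $2r_w$, but this does not matter.)
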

\begin{proof}
To see this, start by using the usual greedy subsequence automaton for \(w=w_1\cdots w_\ell\).
The state \(q\) means that \(w_1\cdots w_q\) has been greedily matched.
A word avoids \(w\) if and only if the automaton never reaches state \(\ell\).
The automaton can advance at most \(\ell-1\) times. Between two consecutive advances, as well as before the first and after the last, the automaton can only read repetitions of one letter.
Hence, every \(w\)-avoiding word has at most \(2\ell-1\) maximal unary blocks, and therefore matches one of finitely many alternating templates.
\end{proof}

\begin{lemma}
\label{lem:avoider-poly-bound}
For every fixed binary pattern \(w\), there exist constants \(m_w,e_w\)
such that, for all nonempty languages \(L_{A,B,C}\), we have
$Avoid_w(A,B,C)
\le
m_w\cdot \left(\min\{C,AB-C\}+1\right)^{e_w}$.
\end{lemma}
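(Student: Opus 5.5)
The plan is to combine Lemma~\ref{lem:bounded-run-avoiders} with a direct count of the words in \(L_{A,B,C}\) that match a single template. Lemma~\ref{lem:bounded-run-avoiders} gives a finite set \(T_w\) of alternating templates, each with at most \(r_w\) blocks, such that every \(w\)-avoiding word matches some template in \(T_w\). Hence
\[
Avoid_w(A,B,C)\le\sum_{\sigma\in T_w}\#\{u\in L_{A,B,C}\mid u \text{ matches } \sigma\}.
\]
It then suffices to show that, for a fixed template \(\sigma=\sigma_1\cdots\sigma_r\), the number of words in \(L_{A,B,C}\) matching it is at most \((M+1)^{r}\), where \(M=\min\{C,AB-C\}\). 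This would give the claim with \(m_w=|T_w|\) and \(e_w=r_w\).

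\textbf{Key step: interior blocks are small.} Write \(u=\sigma_1^{x_1}\cdots\sigma_r^{x_r}\) with all \(x_i\ge 1\), and call a block interior if \(1<i<r\). An interior block has a nonempty block of the other letter both before and after it.
\begin{itemize}
\item If an interior block is an \(a\)-block, each of its \(x_i\) letters \(a\) lies in at least one pair \(ab\) (with a \(b\) after it) and at least one pair \(ba\) (with a \(b\) before it). So \(x_i\le C\) and \(x_i\le AB-C\).
\item The same argument applies to interior \(b\)-blocks.
\end{itemize}
Thus every interior exponent lies in \([1,M]\), and the interior exponents can be chosen in at most \(M^{r-2}\) ways.

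\textbf{Remaining step: the two end blocks are determined.} Fix the interior exponents; I expect this case analysis to be the only slightly delicate part.
\begin{itemize}
\item If \(\sigma_1\ne\sigma_r\), say \(\sigma_1=a\) and \(\sigma_r=b\), then \(x_1=A-\sum(\text{interior }a\text{-exponents})\) and \(x_r=B-\sum(\text{interior }b\text{-exponents})\). Both end exponents are forced by the Parikh vector.
\item If \(\sigma_1=\sigma_r=a\) with \(r\ge 3\), the first block precedes all \(B\) letters \(b\). So \(C=x_1B+(\text{contribution of interior }a\text{-blocks})\), and since \(B>0\) in this template, \(x_1\) is determined by \(C\). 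Then \(x_r\) is determined by \(A\). The case \(\sigma_1=\sigma_r=b\) is symmetric, using \(x_1A\) and \(A>0\).
\item If \(r\le 2\), there are no interior blocks, and the exponents are forced by \(A\) and \(B\). So at most one word matches.
\end{itemize}

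In every case, each template contributes at most \(M^{r-2}\le (M+1)^{r_w}\) words. Summing over the finitely many templates in \(T_w\) completes the argument.
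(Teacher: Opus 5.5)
Your proof is correct. It uses the same template lemma and the same principle as the paper: a block that lies in a pair has length bounded by that pair count. The difference is how you organize the blocks.

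\textbf{The paper's route.} The paper first assumes $C\le AB-C$. It bounds by $C$ every block that takes part in at least one $ab$-pair. These are all blocks except a possible initial $b$-run and a possible final $a$-run, and those two are then fixed by $A$ and $B$ alone. This gives at most $(C+1)^r$ words per template. The case $AB-C<C$ is handled by reversal, which maps $L_{A,B,C}$ to $L_{A,B,AB-C}$ and $w$-avoiders to $w^R$-avoiders. That is why the paper's constants are $e_w=\max\{r_w,r_{w^R}\}$ and also involve the template family of $w^R$.

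\textbf{Your route.} You observe that an interior block lies in both $ab$- and $ba$-pairs, so it is bounded by $\min\{C,AB-C\}$ directly. You pay for this with a case analysis on the two end blocks. When $\sigma_1=\sigma_r$, you need the value of $C$ (or $AB-C$) as an extra linear equation to pin down $x_1$; that step is valid because $B>0$ (respectively $A>0$) once $r\ge 3$.

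\textbf{Comparison.} Your version needs no case split on which of $C$ and $AB-C$ is smaller, and no reversal. It uses only the templates of $w$ and gives the slightly sharper per-template exponent $r-2$. The paper's version keeps the determination of the outer blocks purely in terms of the letter counts.

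One small imprecision: for $r\le 2$ the per-template count is at most $1$, not $M^{r-2}$. Since $1\le (M+1)^{r_w}$, your choice $m_w=|T_w|$ and $e_w=r_w$ still works.
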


\begin{proof}
Fix one template $\sigma_1\sigma_2\cdots \sigma_r$ from the finite family established by Lemma~\ref{lem:bounded-run-avoiders}. We count the words of \(L_{A,B,C}\) matching that template, that is, words of the form $u=\sigma_1^{x_1}\sigma_2^{x_2}\cdots \sigma_r^{x_r}\in L_{A,B,C},$ 
where all \(x_i\ge 1\). For each such \(u\) we have
$C=|u|_{ab}=\sum\limits_{i<j,\ \sigma_i=a,\ \sigma_j=b} x_i x_j$.

First assume that $C\leq AB-C$, and we bound the number of such words in terms of \(C\). If a block \(i\) participates in at least one
summand in this expression, then, since all block lengths are positive,
that summand is at least \(x_i\). Hence \(x_i\le C\). Thus every
interacting block has length at most \(C\).
The only blocks not participating in any \(ab\)-summand are the possible
initial \(b\)-run and the possible final \(a\)-run. Moreover, adjacent symbols in the template are distinct, there is at most one initial \(b\)-block and one final \(a\)-block. Once the interacting block
lengths have been chosen, these outer lengths are uniquely determined by
$\sum\limits_{\sigma_i=a}x_i=A$ and, respectively,
$\sum\limits_{\sigma_i=b}x_i=B$.
Therefore this template gives at most \((C+1)^r\) possible words.

If \(AB-C<C\), apply the same argument to the reversed words. Reversal sends
\(L_{A,B,C}\) bijectively to \(L_{A,B,AB-C}\), and \(w\preceq u\) is equivalent
to \(w^R\preceq u^R\). Since \(w^R\) is fixed, the same polynomial bound applies
with \(AB-C\) in place of \(C\).
Thus the number of $w$-avoiding words is polynomial in \(AB-C\) in this case.

It remains to sum these bounds over the finite template families from Lemma~\ref{lem:bounded-run-avoiders}. We choose $e_w=\max\{r_w,r_{w^R}\}$ (the larger of the constants from Lemma~\ref{lem:bounded-run-avoiders} for $w$ and its reverse) to be the bound on the number of blocks, and $m_w$ the maximum of the numbers of templates associated with $w$ and $w^R$.
\end{proof}


The counts of letters or the subsequences $ab$ or $ba$ being bounded by a constant (boundary regimes) implies languages with few words or very simple structure. We first give a simple lower bound for coefficients of Gaussian binomials when we are not in those boundary regimes. 

\begin{lemma}\label{lem:gaussian-lower}
For every fixed integer \(E\geq 0\), 
there exists \(R_E\) such that whenever
$A,B,C,AB-C\ge R_E,$
we have
$[q^C]\binom{A+B}{A}_q
\ge
\left(\min\{C,AB-C\}+1\right)^E.$
\end{lemma}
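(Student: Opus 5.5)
We would first reduce to $C\le AB-C$. Complementing a partition inside the $A\times B$ box maps area $C$ to area $AB-C$, so $[q^C]\binom{A+B}{A}_q=[q^{AB-C}]\binom{A+B}{A}_q$. With this reduction, $M:=\min\{C,AB-C\}=C\le AB/2$. Next we would invoke the classical unimodality of Gaussian binomial coefficients (Sylvester's theorem): the sequence $c_j=[q^j]\binom{A+B}{A}_q$ is symmetric about $AB/2$ and non-decreasing on $[0,AB/2]$. Hence $c_C\ge c_j$ for every $j\le C$, and averaging gives
\[
c_C\ \ge\ \frac{1}{C+1}\sum_{j=0}^{C}c_j \;=\; \frac{\#\{\lambda\subseteq A\times B \text{ box}: |\lambda|\le C\}}{C+1}.
\]
So it suffices to show that the number of box partitions of area \emph{at most} $C$ is at least $(C+1)^{E+1}$ once $A,B,C,AB-C\ge R_E$. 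This avoids having to hit area exactly $C$, which is the point of using unimodality.

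To count partitions of area at most $C$, we would use an explicit sub-box. Pick $r\le A$ rows and $s\le B$ columns with $rs\le C$. Every partition fitting in the $r\times s$ box lies in the $A\times B$ box and has area at most $C$, so there are at least $\binom{r+s}{r}$ of them. We would choose $r,s$ in one of two ways. First, if $B\ge C/k$ for a fixed $k=E+3$, take $r=k$ (possible since $A\ge R_E\ge k$) and $s=\lfloor C/k\rfloor$. Then $\binom{s+k}{k}\ge (s/k)^k/1\ge c_E\,C^{E+3}$, which exceeds $(C+1)^{E+1}$ for $C$ large. Second, otherwise $B<C/k$, and we take $s=B$ and $r=\min\{A,\lfloor C/B\rfloor\}$. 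Then both $r,s$ are large: $s=B\ge R_E$, and $r\ge\min\{A,k\}$. Moreover $C\le AB\le$ roughly $(r+1)s$, because either $r=A$ or $C<(r+1)B$. The symmetric roles of $A$ and $B$ can be handled the same way, so only these cases are needed.

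The remaining step, which we expect to be the main technical obstacle, is the elementary estimate in the second case. Write $x=\min\{r,s\}$, $y=\max\{r,s\}$, so $C\le 2xy$. We need $\binom{x+y}{x}\ge (2xy+1)^{E+1}$ when $x$ is large. We would split according to whether $y\le x^2$ or $y>x^2$. If $y\le x^2$, then $\binom{x+y}{x}\ge 2^x$, while $(2xy+1)^{E+1}\le (3x^3)^{E+1}$, which is smaller for $x$ large. If $y>x^2$, then $\binom{x+y}{x}\ge (y/x)^x\ge y^{x/2}$, while $(2xy+1)^{E+1}\le y^{2(E+1)}$ (using $x<\sqrt y$). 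So $x\ge 4E+4$ suffices. The one care point is that in the second case $x=r$ might only be guaranteed to be at least $k$ rather than at least $R_E$. We would therefore choose $k$ itself large, say $k=\max\{E+3,4E+4\}$, while keeping the first case valid; alternatively we would take $r$ as large as allowed. We would then choose $R_E$ to dominate all thresholds arising from these finitely many inequalities.
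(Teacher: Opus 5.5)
Your proposal is correct and follows the paper's proof in all essentials. You reduce to $C\le AB/2$ by symmetry, use Sylvester's unimodality to get $(C+1)\,[q^C]\binom{A+B}{A}_q\ge\#\{\lambda\subseteq A\times B : |\lambda|\le C\}$, and bound the latter below by $\binom{r+s}{r}$ from a sub-box of area at most $C$. The paper chooses that sub-box more uniformly (both sides at least $H=2E+4$, longer side at least $\sqrt{C}/2$, then $\binom{H+J}{H}\ge (J/H)^H$), which removes your split into $y\le x^2$ and $y>x^2$.

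One loose end: $k=4E+4$ alone does not give $2^x\ge(3x^3)^{E+1}$. However, in that subcase $x\ge\sqrt{y}\ge\sqrt{B}\ge\sqrt{R_E}$, so taking $R_E$ large closes it. Alternatively, enlarge $k$ as you suggest; this is harmless since your first case works for any fixed $k\ge E+3$.
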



\begin{proof}
Let
$P_{A,B}(t)=[q^t]\binom{A+B}{A}_q$.
Thus \(P_{A,B}(t)\) is the number of partitions of \(t\) whose Young diagrams
fit inside an \(A\times B\) rectangle. The coefficients \(P_{A,B}(t)\) are unimodal by Sylvester's classical result (for a more modern reference, see~\cite{OHARA199029}).

Let $s=\min\{C,AB-C\}$. First we recall that the coefficients of a Gaussian binomial coefficient are symmetric, that is,
\[
[q^t]\binom{A+B}{A}_q = [q^{AB-t}]\binom{A+B}{A}_{q}.
\]
This symmetry follows from the fact that complementing a Young diagram of area $t$ in the $A\times B$ rectangle and rotating it by $180^\circ$ gives a Young diagram enclosed in the same rectangle, but with area $AB-t$.
By this symmetry, it is enough to prove a lower bound
for \(P_{A,B}(s)\), where \(s\le AB/2\). Indeed, if \(s=C\), then
\(P_{A,B}(C)=P_{A,B}(s)\), while if \(s=AB-C\), then
\(P_{A,B}(C)=P_{A,B}(AB-C)=P_{A,B}(s)\).

Fix $H=2E+4$.
We choose \(R_E\) sufficiently large so that \(R_E\ge (2H)^2\) and, for all
\(s\ge R_E\),
\[
\frac{s^{E+2}}{(2H)^H}\ge (s+1)^{E+1}.
\]
The reason for requiring $E+1$ in the exponent here instead of just $E$ is due to a cancellation argument at the end. Now, what we need to show is that, whenever \(A,B,s\ge R_E\) and \(s\le AB/2\), we have
\[
\sum_{t=0}^s P_{A,B}(t)\ge \frac{s^{E+2}}{(2H)^H}.
\]

The reason for considering the cumulative sum is the following. If we can find
a subrectangle of the \(A\times B\) rectangle whose area is at most \(s\), then
every partition fitting inside this subrectangle has area at most \(s\), and
therefore contributes to one of the coefficients \(P_{A,B}(0),\ldots,
P_{A,B}(s)\). Thus a large enough subrectangle of area at most \(s\) gives a
lower bound on the cumulative sum. We shall choose such a subrectangle so that
one of its sides is of order \(\sqrt{s}\), while the other has fixed size
\(H\). This will give polynomially many partitions, with exponent controlled
by \(H\).

Set $M=\min\{A,B\}$ and $N=\max\{A,B\}$.
We want to construct a subrectangle \(\alpha\times\beta\subseteq M\times N\) such that
$\alpha\beta\le s$, $\min\{\alpha,\beta\}\ge H$,
and $\max\{\alpha,\beta\}\ge \sqrt{s}/2$.
There are two cases.

First suppose that \(M\ge \sqrt{s}\) and we take
$\alpha=\beta=\lfloor \sqrt{s}\rfloor$.
This square fits inside \(M\times N\), since
\(\alpha,\beta\le M\le N\), and its area is at most \(s\). Since
\(s\ge R_E\ge (2H)^2\), both sides are at least \(H\). Also
\(\lfloor\sqrt{s}\rfloor\ge \sqrt{s}/2\), so the larger side is at least
\(\sqrt{s}/2\).

Now suppose that \(M<\sqrt{s}\). We take
$\alpha=M$ and $\beta=\left\lfloor\frac{s}{M}\right\rfloor$.
Then \(\alpha\le M\), and \(\alpha\beta\le s\). Moreover, since
\(s\le MN/2\), we have
$\frac{s}{M}\le \frac{N}{2}$,
and hence \(\beta\le N\). Therefore the rectangle
\(\alpha\times\beta\) fits inside \(M\times N\). We also have
\(\alpha=M\ge R_E\ge H\). Since \(M<\sqrt{s}\), we get
\(s/M>\sqrt{s}\), and hence
\[
\beta=\left\lfloor\frac{s}{M}\right\rfloor\ge \sqrt{s}-1
       \ge \frac{\sqrt{s}}{2},
\]
where the last inequality holds because \(s\ge R_E\) is large enough. In
particular, \(\beta\ge H\). Thus the required properties hold in this case
as well.

Since \(M\times N\) is either \(A\times B\) or \(B\times A\), the constructed rectangle fits inside the \(A\times B\) rectangle, possibly after transposing it when $B<A$. In the latter case, transposition gives an area-preserving bijection between partitions fitting inside the \(\alpha\times\beta\) rectangle and partitions fitting inside the \(\beta\times\alpha\) rectangle. Hence, after possibly interchanging
\(\alpha\) and \(\beta\), we may assume that
\(
\alpha\le A
\)
and
\(
\beta\le B.
\)

We have therefore found, in all cases, a subrectangle
\(\alpha\times\beta\subseteq A\times B\) with area at most \(s\), both sides
at least \(H\), and larger side at least \(\sqrt{s}/2\). Consequently,
\[
\sum_{t=0}^s P_{A,B}(t)
\ge
\#\{(\lambda_1,\ldots,\lambda_\alpha) \mid
\beta\ge\lambda_1\ge\cdots\ge\lambda_\alpha\ge0\}
=
\binom{\alpha+\beta}{\alpha}.
\]
Let
$J=\max\{\alpha,\beta\}$.
Since the rectangle \(\alpha\times\beta\) has one side \(J\) and the other side
at least \(H\), it contains, up to transposition, an \(H\times J\) rectangle.
Hence
\[
\binom{\alpha+\beta}{\alpha}
\ge
\binom{H+J}{H}
=
\prod_{i=1}^H \frac{J+i}{i}
\ge
\left(\frac{J}{H}\right)^H .
\]
Using \(J\ge \sqrt{s}/2\), we obtain
\[
\sum_{t=0}^s P_{A,B}(t)
\ge
\left(\frac{\sqrt{s}}{2H}\right)^H
=
\frac{s^{H/2}}{(2H)^H}.
\]
Since \(H=2E+4\), we have \(H/2=E+2\), and therefore
\[
\sum_{t=0}^s P_{A,B}(t)
\ge
\frac{s^{E+2}}{(2H)^H}.
\]

Finally, since \(s\le AB/2\), symmetry and unimodality imply that
\(P_{A,B}(s)\) is the largest coefficient among
\(P_{A,B}(0),P_{A,B}(1),\ldots,P_{A,B}(s)\). Hence
\[
(s+1)P_{A,B}(s)\ge \sum_{t=0}^s P_{A,B}(t),
\qquad
\text{ and therefore }\qquad
P_{A,B}(s)
\ge
\frac{1}{s+1}\cdot \frac{s^{E+2}}{(2H)^H}.
\]
By the choice of \(R_E\), the right-hand side is at least \((s+1)^E\).
Therefore, as required,
$P_{A,B}(s)\ge (s+1)^E$.
Since \(s=\min\{C,AB-C\}\), symmetry gives \(P_{A,B}(C)=P_{A,B}(s)\).
Therefore
\[
[q^C]\binom{A+B}{A}_q
=
P_{A,B}(C)
=
P_{A,B}(s)
\ge
(\min\{C,AB-C\}+1)^E.\qedhere
\]
%
\end{proof}

Next we show that, if each of $A,B,C,AB-C$ is large enough, the containment probability
of a fixed pattern is either 0 or arbitrarily close to 1.

\begin{lemma}\label{lem:interior-dominance}
Fix a binary pattern \(w\) and \(\varepsilon>0\).  There exists
\(R=R(w,\varepsilon)\) such that if
$A,B,C,AB-C\ge R$
and the language \(L_{A,B,C}\) contains at least one word having \(w\) as subsequence, then
$p_w(A,B,C)\ge 1-\varepsilon$.
\end{lemma}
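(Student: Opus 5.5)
The plan is to combine the polynomial upper bound on avoiders (Lemma~\ref{lem:avoider-poly-bound}) with the polynomial lower bound on the size of the language (Lemma~\ref{lem:gaussian-lower}). Writing $s=\min\{C,AB-C\}$, we have
\[
1-p_w(A,B,C)=\frac{Avoid_w(A,B,C)}{|L_{A,B,C}|}\le \frac{m_w (s+1)^{e_w}}{[q^C]\binom{A+B}{A}_q}.
\]
So it suffices to make the denominator exceed $m_w(s+1)^{e_w}/\varepsilon$.

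Apply Lemma~\ref{lem:gaussian-lower} with $E=e_w+1$. For $A,B,C,AB-C\ge R_{e_w+1}$ this gives $|L_{A,B,C}|\ge (s+1)^{e_w+1}$. Hence
\[
1-p_w\le \frac{m_w}{s+1}.
\]
Since $s\ge R$, this is at most $\varepsilon$ as soon as $R+1\ge m_w/\varepsilon$. Therefore take
\[
R=\max\{R_{e_w+1},\lceil m_w/\varepsilon\rceil\}.
\]
The constants $m_w,e_w$ depend only on $w$, so $R$ depends only on $w$ and $\varepsilon$, as required.

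This argument never uses the hypothesis that some word of $L_{A,B,C}$ contains $w$. That is consistent: for large parameters the bound forces $p_w$ near $1$, so in particular containment exists automatically. The hypothesis is harmless, and it matters only for cases outside the regime, such as when $|w|_a>A$, which cannot occur once $A\ge R$ is larger than $|w|$.

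I expect no real obstacle, since the work is in the two earlier lemmas. The step needing care is matching the exponents: the lower bound must use exponent $e_w+1$, not $e_w$, so that a factor $1/(s+1)$ remains and can be made smaller than $\varepsilon$. It also matters that both lemmas are stated in terms of the same quantity $\min\{C,AB-C\}+1$.
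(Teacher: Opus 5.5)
Your proposal is correct and follows the paper's proof. Like the paper, you combine the avoider bound $m_w(s+1)^{e_w}$ with the Gaussian lower bound at an exponent $E>e_w$ and choose $R\ge R_E$ so that $m_w/(R+1)^{E-e_w}\le\varepsilon$; you fix $E=e_w+1$ and $R=\max\{R_{e_w+1},\lceil m_w/\varepsilon\rceil\}$ explicitly, where the paper leaves them general. Your remark that the containment hypothesis is never used applies equally to the paper's argument.
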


\begin{proof}
Let $m_w$ and $e_w$ be the constants from
Lemma~\ref{lem:avoider-poly-bound}.  Choose an integer \(E>e_w\).
By Lemma~\ref{lem:gaussian-lower}, 
whenever
$A,B,C,AB-C\ge R_E$
we have that
\[
|L_{A,B,C}|
=
[q^C]\binom{A+B}{A}_q
\ge
\left(\min\{C,AB-C\}+1\right)^E.
\]
On the other hand,
$Avoid_w(A,B,C)
\le
m_w\left(\min\{C,AB-C\}+1\right)^{e_w}$.
Therefore
\begin{equation}\label{eq:gaussian}
\frac{Avoid_w(A,B,C)}{|L_{A,B,C}|}
\le
\frac{m_w}
{\left(\min\{C,AB-C\}+1\right)^{E-e_w}}.
\end{equation}
Choose \(R\ge R_E\) sufficiently large that
\[
\frac{m_w}{(R+1)^{E-e_w}}\le \varepsilon.
\]
Since $m_w$, $e_w$ and $E$ are constants, such a choice is possible, and $A,B,C,AB-C\ge R$ guarantees that the right-hand side of inequality~(\ref{eq:gaussian}) above is at most
\(\varepsilon\).  Hence
\[
p_w(A,B,C)=1-\frac{Avoid_w(A,B,C)}{|L_{A,B,C}|}
\ge 1-\varepsilon.
\qedhere\]
\end{proof}


Let \(R\) be fixed.  We show that \(p_w(A,B,C)\) can be computed exactly
in time
$f_w(R)\cdot \operatorname{poly}(\log A+\log B+\log C)$
whenever at least one of
$A$, $B$, $C$, or $AB-C$
is at most \(R\). Note that here $f_w(R)$ denotes a function which in general is not necessarily polynomial in $R$.

\begin{lemma}\label{lem:boundary-exact}
Fix a binary pattern \(w\) and an integer \(R\). There is an algorithm which
computes \(p_w(A,B,C)\) in time
\(f_w(R)\cdot \operatorname{poly}(\log A+\log B+\log C)\), whenever
\(L_{A,B,C}\) is nonempty and at least one of
\(A,B,C,AB-C\) is at most \(R\).
\end{lemma}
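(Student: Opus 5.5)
The plan is to split into the four boundary regimes and reduce them to two by symmetry. Reversal maps \(L_{A,B,C}\) bijectively onto \(L_{A,B,AB-C}\) and replaces \(w\) by \(w^R\), as in the proof of Lemma~\ref{lem:avoider-poly-bound}. The map exchanging \(a\leftrightarrow b\) and then reversing sends \(L_{A,B,C}\) bijectively onto \(L_{B,A,C}\) and replaces \(w\) by the corresponding transformed fixed pattern. Both maps preserve the containment relation up to transforming \(w\). Hence it suffices to treat the case \(C\le R\) and the case \(A\le R\), for every fixed pattern. In each case I compute \(|L_{A,B,C}|\) and \(Avoid_w(A,B,C)\) exactly as binary integers, and output \(p_w=1-Avoid_w/|L_{A,B,C}|\) as a rational.

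\emph{Case \(C\le R\).} Words of \(L_{A,B,C}\) correspond to partitions \(\lambda\) of \(C\) with at most \(A\) nonzero parts, each part at most \(B\). I enumerate all partitions of \(C\); there are at most \(p(R)\) of them. For each, I check that it has at most \(A\) parts and that its largest part is at most \(B\); these are comparisons with binary integers. Each surviving \(\lambda\) determines its word in run-length encoded form with \(\bigo(R)\) runs, where the run lengths are binary numbers such as \(B-\lambda_1\) and \(A-\ell(\lambda)\). I then test \(w\preceq u\) by running the greedy subsequence automaton of Lemma~\ref{lem:bounded-run-avoiders} run by run. A run \(\sigma^{x}\) advances the automaton over the next \(\min(x,\cdot)\) consecutive copies of \(\sigma\) in \(w\). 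This takes time \(\bigo(R\,|w|)\) arithmetic operations on binary numbers. The counts obtained are the total and the number of avoiders, both bounded by \(p(R)\).

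\emph{Case \(A\le R\).} For the denominator I use \(|L_{A,B,C}|=[q^C]\prod_{i=1}^A(1-q^{B+i})/(1-q^i)\). Expanding the numerator gives at most \(2^A\) signed monomials \(q^{s}\), with \(s\) a sum of distinct values \(B+i\). So the coefficient equals a signed sum of values \(p_{\le A}(C-s)\), where \(p_{\le A}(m)\) counts partitions of \(m\) into parts of size at most \(A\). For fixed \(A\), \(p_{\le A}(m)\) is a quasi-polynomial in \(m\) of degree \(A-1\) with period dividing \(\operatorname{lcm}(1,\ldots,A)\). I would precompute, for each residue class, the polynomial by interpolation from small values; this costs \(f(R)\) time. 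Evaluating it at binary \(m\) is then polynomial in the bitlength.

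For the numerator, I use Lemma~\ref{lem:bounded-run-avoiders}: every avoider matches one of finitely many templates. Distinct templates with positive exponents give disjoint sets of words, since the template is the print of the word, so the counts simply add. For a fixed template I enumerate the \(a\)-block lengths, which form a composition of \(A\le R\) into the \(a\)-blocks; there are finitely many choices. With these fixed, \(C=\sum x_ix_j\) becomes a linear equation in the \(b\)-block lengths. Together with \(\sum_{\sigma_i=b}x_i=B\) and \(x_i\ge1\), this gives a count of lattice points in a polytope of bounded dimension defined by two linear equations with binary coefficients. Such counts are computable in polynomial time for fixed dimension, via Barvinok's algorithm, or more elementarily as a vector partition function that is piecewise quasi-polynomial with boundedly many explicitly computable pieces. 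I also need to verify that every word so counted really avoids \(w\), since a template only contains the avoiders and need not consist of them. Whether a word matching a template contains \(w\) depends only on the values \(\min(x_i,|w|)\), so I additionally split each block length into the cases \(x_i<|w|\), each treated as a fixed value, and \(x_i\ge|w|\). This adds only finitely many subcases, each decided once.

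I expect the main obstacle to be this last counting step. It requires exact counting of solutions of the bilinear equation \(C=\sum x_ix_j\) in the regime \(A\le R\) in time polynomial in \(\log B+\log C\), while also correctly handling the constraint ``avoids \(w\)''. The threshold-splitting trick and the reduction to fixed-dimensional lattice-point counting are what I would try first.
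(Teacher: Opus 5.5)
Your proposal is correct, and its skeleton matches the paper's. You use the same two symmetries to reduce everything to small area and small height. Your small-area case is the same enumeration of at most $p(R)$ partitions; keeping binary run lengths instead of capping the outer runs at $|w|$ is an equivalent device. Your small-height case rests, as in the paper, on fixed-dimensional lattice-point counting plus the fact that only $\min(x,|w|)$ matters for each run.

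Where you differ is in how you organize the small-height count. The paper stays in the coordinates $\lambda_1,\ldots,\lambda_A$:
\begin{itemize}
\item the denominator is the number of lattice points in $B\ge\lambda_1\ge\cdots\ge\lambda_A\ge0$, $\sum_i\lambda_i=C$;
\item the numerator comes from splitting each $b$-run length ($B-\lambda_1$, $\lambda_i-\lambda_{i+1}$, $\lambda_A$) at the threshold $|w|$;
\item this yields finitely many linear systems in dimension $A$, all counted by Barvinok's algorithm.
\end{itemize}

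You proceed differently on both counts. For the denominator, you expand $\prod_{i=1}^A(1-q^{B+i})$ and use quasi-polynomiality of $p_{\le A}$, which avoids Barvinok there altogether. Take care to set $p_{\le A}(m)=0$ for $m<0$ rather than evaluating the quasi-polynomial there, since it does not vanish for all negative $m$ (already for $A=1$ it is the constant $1$). For the numerator, you count avoiders via the templates of Lemma~\ref{lem:bounded-run-avoiders}, enumerating compositions of $A$ into the $a$-blocks so that the bilinear constraint becomes linear in the $b$-block lengths.

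Your route reuses an earlier lemma and bounds the dimension of each lattice-point problem by the number of $b$-blocks in a template, a function of $|w|$ alone. The cost is extra bookkeeping: disjointness of templates, enumeration of compositions, and a separate avoidance check per cell. The paper's route is shorter and more uniform, since one model and one counting tool serve both numerator and denominator.
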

\begin{proof}
If any of the input counts are $0$, then the probability can be computed trivially. Now assume the counts are positive.\\
\noindent{\bf Small area \(C\le R\).}
Every partition of area \(C\) fits inside a \(C\times C\) square.  Thus
we enumerate all partitions \(\lambda\) of \(C\), discard those
with more than \(A\) parts (rows) or largest part greater than \(B\), and test
whether the corresponding word contains \(w\) as subsequence.

If \(\lambda\) has \(r\le C\) nonzero parts and largest part \(s\le C\),
then the corresponding word has the form
$b^{B-s}\,U_\lambda\,a^{A-r}$,
where 
\(
U_\lambda=a b^{\lambda_1-\lambda_2}
a b^{\lambda_2-\lambda_3}
\cdots
a b^{\lambda_{r-1}-\lambda_r}
a b^{\lambda_r}
\) that has at most \(C\) occurrences of \(a\) and at most
\(C\) occurrences of \(b\).  For testing containment of the fixed pattern
\(w\), the outer blocks may be capped at \(|w|\):
\[
b^{B-s}\quad\leadsto\quad b^{\min\{B-s,|w|\}},
\qquad
a^{A-r}\quad\leadsto\quad a^{\min\{A-r,|w|\}}.
\]
Thus each test is performed on a word of length \(\bigo(C+|w|)\).

Since the number \(p(C)\) of partitions of \(C\) is
$p(C)\in\exp(\bigo(\sqrt{C}))$,
the small-area case is exact in time
$\exp(\bigo(\sqrt R))\operatorname{poly}(|w|,\log A,\log B)$.

\noindent{\bf Small coarea \(AB-C\le R\).}
By reversing the word we have
$|u^R|_{ab}=|u|_{ba}=AB-C$,
and
$w\preceq  u$ is equivalent to
$w^R\preceq  u^R$.
Hence
$p_w(A,B,C)=p_{w^R}(A,B,AB-C)$,
which reduces to the small-area case.

\noindent{\bf Small height \(A\le R\).}
Write
\[
u_\lambda =
b^{B-\lambda_1}a b^{\lambda_1-\lambda_2}a\cdots
a b^{\lambda_A},
\]
where \(B\ge\lambda_1\ge\cdots\ge\lambda_A\ge0\) and
\(\sum_i\lambda_i=C\).  In 
\(
p_w(A,B,C)=
\frac{|\{u\in L_{A,B,C} \mid w\preceq u\}|}{|L_{A,B,C}|},
\) 
the denominator is the number of integer points in
this fixed-dimensional polytope.  For the numerator, run the deterministic
greedy subsequence automaton for the fixed word \(w\) on the input \(u_\lambda\) above.
Consider one of its unary $b$-blocks of length, say, $t$. From any state of the automaton, reading $b^t$ can advance the greedy match by at most $|w|$ states. Hence, for determining the resulting state, it is enough to distinguish the cases 
\(
t=0,1,\ldots,|w|-1
\)
and $t\ge |w|$.
Therefore, since both
\(A\le R\) and \(w\) are fixed, the condition \(w\preceq u_\lambda\) is a
finite 
disjoint union of systems
of linear inequalities in the variables \(\lambda_1,\ldots,\lambda_A\).
Therefore, both the numerator and the denominator can be computed by fixed-dimensional
integer-point counting, for instance by Barvinok's algorithm~\cite{Barvinok}, in time
\(f_w(R)\cdot \operatorname{poly}(\log A+\log B+\log C)\).


\noindent{\bf Small width \(B\le R\).}
Let \(\tau\) be the morphism exchanging \(a\) and \(b\).
The map \(u\mapsto \tau(u^R)\) is a bijection from \(L_{A,B,C}\) to
\(L_{B,A,C}\), since reversing changes \(ab\)-pairs into \(ba\)-pairs and then
exchanging the letters changes them back into \(ab\)-pairs. Moreover,
\(w\preceq u\) if and only if \(\tau(w^R)\preceq \tau(u^R)\). Thus the small-width
case reduces to the small-height case for the fixed pattern \(\tau(w^R)\).
\end{proof}

Finally, we combine the preceding lemmas to get the main result.

\begin{theorem}
\label{thm:fixed-pattern-ptas}
Fix a binary pattern \(w\) and an error tolerance \(\varepsilon>0\). There is an
algorithm which, on input nonnegative integers \(A,B,C\) with
\(L_{A,B,C}\neq\emptyset\), outputs a rational number \(p\) satisfying
\[
|p-p_w(A,B,C)|\le \varepsilon,
\]
and runs in time polynomial in the bitlength of \(A,B,C\), where the polynomial
may depend on \(w\) and \(\varepsilon\).

\end{theorem}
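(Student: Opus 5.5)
The algorithm splits into two regimes, the boundary and the interior, according to the threshold \(R=R(w,\varepsilon)\) provided by Lemma~\ref{lem:interior-dominance}. Since \(w\) and \(\varepsilon\) are fixed, \(R\) is a constant. On input \(A,B,C\) (with \(0\le C\le AB\) checked first), we compute \(D=AB-C\) and compare each of \(A,B,C,D\) with \(R\); these are polynomial-time arithmetic operations on binary integers.

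\textbf{Boundary case.} If at least one of \(A,B,C,D\) is at most \(R\), we invoke Lemma~\ref{lem:boundary-exact} and output \(p=p_w(A,B,C)\) exactly. The error is zero, and the running time is \(f_w(R)\cdot\operatorname{poly}(\log A+\log B+\log C)\). Because \(f_w(R)\) depends only on \(w\) and \(\varepsilon\), this is polynomial in the bitlength, with a polynomial that depends on \(w,\varepsilon\) as the statement allows. The output is rational, being a ratio of two integer counts.

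\textbf{Interior case.} If \(A,B,C,D\ge R\), Lemma~\ref{lem:interior-dominance} tells us that \(p_w(A,B,C)\) is either \(0\) or at least \(1-\varepsilon\). The two alternatives are separated by whether some word of \(L_{A,B,C}\) contains \(w\), i.e.\ by the \(\exists\)-\textsc{Subsequence} question. So the remaining task is to decide \(\exists\)-\textsc{Subsequence} for \(L_{A,B,C}\) and fixed \(w\) in polynomial time. We then output \(0\) if the answer is negative, which is exact, and \(1\) if it is positive, which has error at most \(\varepsilon\).

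To decide existence, we generalize Lemma~\ref{lem:supersequence-interval}. Take the shortest supersequences of \(w\) with exactly \(|v|_a\le A\) and \(|v|_b\le B\); without loss of generality we may simply use \(v=w\) itself, provided \(|w|_a\le A\) and \(|w|_b\le B\). Lemma~\ref{lem:supersequence-interval} states that the \(ab\)-counts of words with \(A\) letters \(a\) and \(B\) letters \(b\) that contain \(w\) form exactly the interval \([|w|_{ab},\,AB-|w|_{ba}]\). Hence some word of \(L_{A,B,C}\) contains \(w\) if and only if
\[
|w|_a\le A,\qquad |w|_b\le B,\qquad |w|_{ab}\le C\le AB-|w|_{ba}.
\]
This is a constant number of integer comparisons.

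The main point to verify is that this existence test is exact rather than merely sufficient. Applying Lemma~\ref{lem:supersequence-interval} with \(v=w\) gives both directions directly: any word containing \(w\) embeds \(w\), so it has the letter-count bounds and its \(ab\)-count lies in the interval. Therefore no search over supersequences is needed.

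The second point to verify is that \(R\) is a computable constant. This holds because it is built from the explicit constants \(m_w\), \(e_w\) and \(R_E\) in Lemmas~\ref{lem:avoider-poly-bound} and~\ref{lem:gaussian-lower}. Combining the two cases gives the claimed additive approximation scheme.
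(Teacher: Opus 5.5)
Your proposal is correct and follows essentially the same route as the paper. You split on the threshold $R=R(w,\varepsilon)$ of Lemma~\ref{lem:interior-dominance}, compute $p_w$ exactly via Lemma~\ref{lem:boundary-exact} in the boundary regime, and in the interior output $0$ or $1$ according to the exact existence test from Lemma~\ref{lem:supersequence-interval} with $v=w$; the only difference is that the paper runs this existence test before the boundary/interior split rather than only in the interior case, which is immaterial.
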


\begin{proof}
Choose \(R=R(w,\varepsilon)\) as in
Lemma~\ref{lem:interior-dominance}.

First test whether there exists any word in \(L_{A,B,C}\) containing
\(w\) as subsequence.  For binary Parikh matrices, this 
is decidable in polynomial time by the interval criterion of  Lemma~\ref{lem:supersequence-interval} for
the possible \(ab\)-count after embedding the pattern. 
If no such word
exists, then \(p_w(A,B,C)=0\), and the algorithm outputs \(0\).

Otherwise, if one of
$A$, $B$, $C$, or $AB-C$
is at most \(R\), compute \(p_w(A,B,C)\) exactly by the boundary
algorithms from Lemma~\ref{lem:boundary-exact}.

If none of these four quantities is at most \(R\), then
$A,B,C,AB-C>R$.
By Lemma~\ref{lem:interior-dominance}, we have
$p_w(A,B,C)\ge 1-\varepsilon$.
The algorithm outputs \(1\), which has an additive error at most
\(\varepsilon\).

The algorithms in the boundary regimes (constant-bounded counts) run in time depending on \(R\) and \(w\), times a
polynomial in the bitlength of \(A,B,C\).  
For fixed \(w\) and fixed \(\varepsilon\), the number \(R=R(w,\varepsilon)\)
is a constant. Hence the boundary algorithms run in
\(f_{w}(R)\cdot \operatorname{poly}(\log A+\log B+\log C)\) time, and the
interior case is decided by the constant number of arithmetic tests above.
Thus the total running time is polynomial in the input bitlength.
%
\end{proof}

\section{Acknowledgements}
The authors used OpenAI’s ChatGPT 5.5 to assist with drafting this manuscript, editing it for clarity, and consolidating the preliminary proof ideas of Section~\ref{sec:ratio}. All arguments were checked, revised, and approved by the authors, who take full responsibility for the content.

\bibliographystyle{eptcs}
\bibliography{AFL_bib}

\end{document}